\newtheorem{definition}{Definition}
\newtheorem{theorem}{Theorem}
\newtheorem{proof}{Proof}
\title{Constructions of Control Sequence Set for Hierarchical Access in Data Link Network}
\author{Xianhua Niu, Jiabei Ma, Enzhi Zhou, Yaoxuan Wang, Bosen Zeng, and Zhiping Li}
\begin{document}
	\maketitle
	
	\begin{abstract}
		Time slots are a valuable channel resource in the data link network with time division multiple access architecture. The need for finding a secure and efficient way to meet the requirements of large access capacity, differentiated access, maximum utilization of time slot resource and strong anti-eavesdropping ability in data link networks is well motivated.
		In this paper, a control sequence-based hierarchical access control scheme is proposed, which not only achieves differentiated time slots allocation for the different needs and levels of nodes, but also enhances randomness and anti-interception performance  in data link networks.
		Based on the scheme, a new theoretical bound is derived to characterize parameter relationships for designing optimal hierarchical control sequence(HCS) set. Moreover, two flexible classes of optimal hierarchical control sequence sets are constructed.
		By our construction, the terminal user in the data link can access hierarchically and randomly and transmit data packets during its own hopping time slots of the successive frames to prevent eavesdropping while maintaining high throughput.
		\keywords{Data link; Access control; Control sequence; Hierarchical access}
	\end{abstract}

	\section{Introduction}\label{intro}
	High capacity, high data transmission rate, and strong anti-eavesdropping capability play a crucial role in ensuring communication quality\cite{Sturdy}. Specifically, as a wireless communication system, the well-known data links like Link4, Link11, Link16\cite{C.H.Kao}, Link22\cite{R.L}, and the data link DTS-03 developed by China, all of which possess the aforementioned characteristics. Moreover, future research will continue to focus on achieving differentiated access communication.
	The issue of differentiated access communication refers to the varying communication effects that occur in practical application scenarios due to factors such as different nodes, different communication channels, and different channel conditions, even under a unified communication or access rule.
	For example, the bandwidth levels and transmission rates of different communication channels to some extent influence the duration of transmitting messages of varying lengths. Furthermore, there are varying interception risks associated with the transmission of messages in channels of different security levels.
	In the literature, several methods have been proposed to improve communication quality for differentiated access communication.
	For instance, in \cite{zhang}, Zhang, $\emph{et al}$. proposed an Opportunistic User Selection (OUS) scheme to address the problem of different user transmission rates in a bidirectional relay channel. In this scheme, users with better instantaneous signal-to-noise ratio in the end-to-end channel are given priority for access and transmission.
	Block, $\emph{et al}$.\cite{block} proposed an adaptive transmission protocol that provides corresponding channel state information before each data transmission. The protocol utilizes the received channel information to adapt the transmission parameters in direct sequence spread spectrum wireless networks, thereby improving system performance during data transmission, such as throughput efficiency.
	And in \cite{zhangY}, Zhang, $\emph{et al}$. proposed using a Layered Division Multiplexing (LDM) based hierarchical multicast scheme to improve network performance, such as throughput. The approach involves assigning each user corresponding to the broadcast content to the appropriate LDM layer and finding the optimal transmit power and data rate for each layer.
	We can see from those papers that it is also extremely important to achieve differentiated access communication in the data link system.
	\subsection{Access in Data Link}
	In wireless communication, achieving secure and interference-free access and data transmission between nodes has always been an important issue. Previous studies have also proposed various solutions to address the challenges of node access and transmission process encountering interferences.
	For example, complementary sequences have been proposed in the code division multiple access (CDMA) architecture to separate and reduce signal interference between communication nodes\cite{zhou,shen,Adhikary}.
	Furthermore, a channel hopping sequence is proposed to ensure that communication nodes converge at least once on a common channel within a specific period of time\cite{wang}.
	There are two main types of data link access strategies based on time division multiple access (TDMA) architecture as follows: "contention-based" channel access schemes and "collision avoidance" channel access schemes.
	The contention-based channel access scheme primarily use the techniques of carrier sensing, backoff waiting, and error frames retransmission, nodes that need to access the channel compete for access to the channel through a backoff algorithm. In recent years, the main studies on contention-based channel access schemes incldue ALOHA, Carrier Sense Multiple Access (CSMA), Floor Acquisition Multiple Access (FAMA), Dual Busy Tone Multiple Access (DBTMA)\cite{I.Demirkol,H.H.Choi1,W.Hu,A.N.Alvi,H.H.Choi2}, etc.
	The "collision avoidance" channel access schemes include static, dynamic, and sequence-based channel access techniques.
	The static access scheme uses a centralized time slot allocation method, which assigns fixed time slot resources to each network node\cite{X.Liu}.
	The dynamic channel access scheme dynamically calculates the number of time slots required by each node, using a distributed time slot allocation method to enable each node to obtain the necessary time slot resources, reducing end-to-end message delay, and improving channel utilization. The research findings of various types of dynamic channel access schemes have been publicly published for the past few years\cite{Z.G,A.M.Lewis,T.Zhang,F.Lyu}.
	The sequence-based channel access techniques include protocol sequences, conflict-avoiding codes, and hopping sequences.
	Protocol sequence is a periodic binary sequence designed to control node access to the channel for a multi-node access model on a collision channel without feedback\cite{J.Massey1,J.Massey2}. Each node sends data according to the assigned protocol sequence, ensuring that the data is successfully transmitted within a fixed delay and providing good stability for short-term communication performance. Up to now, there have been a lot of constructions of protocol sequence, such as  Chinese Remainder Theorem(CRT) sequence, prime sequence, extended prime sequence\cite{K.W.Shum,Y.Zhang,L.Q.Gui,F.Liu}, etc.
	Conflict-avoiding code is used to ensure that each node can successfully send data at least once within a sequence cycle under time slot synchronization\cite{V.I.Levenshtein,H.Fu,M.Mishima}.
	Hopping sequence is $q$-ary pseudo-random sequence with good anti-interference performance. It is designed to ensure that within one period, any pair of transmitting and receiving devices can converge on a common channel for all delays\cite{H.Shao,Niu1,Niu2}.
	However, the aforementioned techniques have not effectively addressed the issue of collision-free node access due to their limitations, such as allowing for time slots conflicts, unstable communication links, and low slot resource utilization efficiency.
	Consequently, a new type of sequence, known as Control Sequence(CS), was proposed.
	CS is a kind of $q$-ray pseudo-random sequence with zero correlation, zero time delay and non-periodic properties oriented to the data link.
	In 2018, Liu, $\emph{et al}$.\cite{Liu} first proposed the control sequence to solve the access control problems in data link network.
	In 2020, Niu, $\emph{et al}$.\cite{Niu3} proposed an access application control model access control scheme for the situation where all terminal users' frames are synchronised on the data link, and constructed a new control sequence set based on the model.
	In 2022, Tan, $\emph{et al}$.\cite{Tan} proposed a self-organizing model access control scheme and constructed a new control sequence set based on the scheme.
	These CS-based schemes allow each communication node to access in data link once in a time frame without conflict.
	However, in actual applications, due to the differences in task content between nodes, the priority of each node's data packets will vary, which means that nodes with higher packet priority need to occupy more time slots in a frame to meet the demand for rapid and efficient real-time data transmission(i.e., the node has a higher access level).
	As shown in Fig.\ref{fig:datalink}, assuming a Unmanned Aerial Vehicle(UAV) cluster exercise is taking place, the information exchange between the UAV and the control center becomes more frequent. In other words, the UAV access the data link more frequently within one frame, indicating a higher access level.
	\begin{figure}[tb]
		\begin{center}
			\includegraphics[width=\columnwidth]{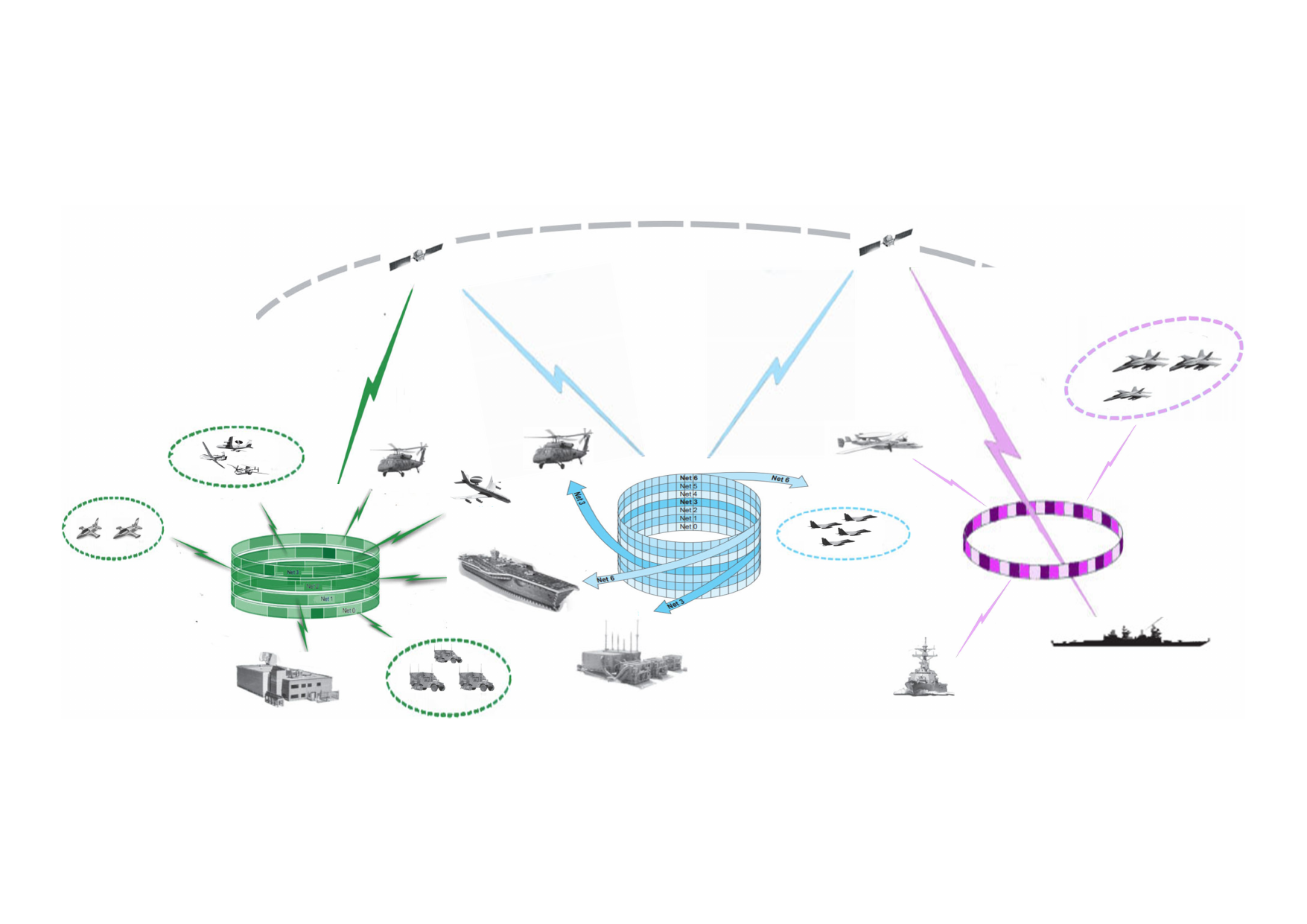}
		\end{center}
		\caption{Data Link Networks}
		\label{fig:datalink}
	\end{figure}
	Therefore, finding a solution to the problem of collision-free access between nodes under different conditions and states (i.e., the differentiated access communication problem) is what we need to focus.
	The rest of the paper is organized as follows: The main notations and definitions are introduced in Section~\ref{def}.
	Section~\ref{model} introduces the new hierarchical access model based on hierarchical control sequence and enumerates the conditions that hierarchical control sequences must satisfy.
	Section~\ref{Bound} presents the main theorem of our proposed new bound on the control sequence for hierarchical access.
	Two classes of HCS sets are constructed in Section~\ref{Cons}.
	Performance evaluation is given in Section~\ref{evaluate}. The last section concludes the paper.
	\section{Notations and Definitions}\label{def}
	For convenience, the main notations used in this paper are shown as follows:
	\begin{itemize}
		\item \emph{t}: The number of time slots in a time frame;
		\item \emph{$\lambda$}: The number of access levels;
		\item \emph{$r_i$}: The number of time that a user with level $i$ accesses the channel in a time frame, i.e., the level value, $0 \leq i \leq \lambda-1$;
		\item \emph{$u_i$}: The number of users with level $i$;
		\item $gcd(x,y)$: the greatest common divisor of $x$ and $y$;
		\item $\langle x \rangle_y$: $x$ modulo $y$;
	\end{itemize}
	Let $\mathcal{T}$ be a time slot set with size $|\mathcal{T}|=\emph{t}$, $\boldsymbol{S}$ be a set of $\textit{n}$ sequences of length $\textit{l}$ over $\mathcal{T}$.
	For any two sequences $\boldsymbol{x}=(x_0,x_1,...,x_{l-1})$, $\boldsymbol{y}=(y_0,y_1,...,y_{l-1})$ $\in \boldsymbol{S}$, the Hamming
	correlation function $\emph{H}_{\boldsymbol{x},\boldsymbol{y}}$ of $\boldsymbol{x}$ and $\boldsymbol{y}$ is defined as follows:
	\begin{equation}
		\emph{H}_{\boldsymbol{x},\boldsymbol{y}}(\tau)=\sum_{i=0}^{l-1}h(x_i,y_{i+\tau}),0\leq \tau < l
	\end{equation}
	where $h(a,b)=1$ if \textit{a}=\textit{b}, and $h(a,b)=0$ otherwise.
	As the new model is based on HCS, to distinguish between HCS and CS, we provide the definitions of HCS and CS.
	\begin{definition}
		A sequence set $\boldsymbol{S}$ is called a (\emph{l},\emph{M},\emph{t}) CS set over time slot set with size \emph{t} which includes \emph{M} sequences of length \emph{l}
		if the Hamming correlation $\emph{H}_{\boldsymbol{x},\boldsymbol{y}}(0)=0$ for any two control sequences $\boldsymbol{x}, \boldsymbol{y} \in \boldsymbol{S}$.
	\end{definition}
	Note that a CS set is applied for a group of users with the same access level.
	\begin{definition}
		A sequence set $\boldsymbol{S}$ is called a (\emph{l},\emph{M},\emph{t};\emph{$\sum_{i=0}^{\lambda-1}u_i$},\emph{$\lambda$}) HCS set with $\emph{M}$ sequences of length $\emph{l}$ over time slot set of size $\emph{t}$,
		which includes \emph{$\sum_{i=0}^{\lambda-1}u_i$} users with \emph{$\lambda$} access levels if the Hamming correlation $\emph{H}_{\boldsymbol{x},\boldsymbol{y}}(0)=0$ for any two sequences $\boldsymbol{x}, \boldsymbol{y} \in \boldsymbol{S}$.
	\end{definition}
	\section{New Hierarchical Access Control Model}\label{model}
	Based on the considerations of the aforementioned problem in Section~\ref{intro}, a Hierarchical Control Sequence(HCS)-based new hierarchical access control mode under TDMA architecture is proposed for nodes with different access levels.
	As shown in Fig.\ref{fig.newModel}, four users with three different access levels are given. User $\boldsymbol{A}$ and $\boldsymbol{B}$ have the same access level. User $\boldsymbol{B}$ is assigned to $\boldsymbol{S}_Y$ and transmits data from time slot $\boldsymbol{S}_Y(1)$ to $\boldsymbol{S}_Y(n-1)$, when the transmission completed, user $\boldsymbol{B}$ would release the $\boldsymbol{S}_Y$.
	User $\boldsymbol{A}$ is assigned to $\boldsymbol{S}_Y$ after completing system synchronization, and starts data transmission from time slot $\boldsymbol{S_Y}(n)$.
	The level value of user $\boldsymbol{C}$ is 2. After being assigned to $\boldsymbol{S_X}$, the user starts data transmission from time slot $\boldsymbol{S}_{X1}(2),\boldsymbol{S}_{X2}(2)$. The access process of user $\boldsymbol{D}$ is the same as above.
	We can also see from Fig.\ref{fig.newModel} that for the given number of time slots in a frame and access times of users, the number of users with each access level needs to be determined to ensure the maximum utilization of time slots. Therefore, a new theoretical bound on the access times of users, the number of users with different levels and the number of time slots is proposed.
	In addition, two classes of new HCS sets are constructed.
	\begin{figure}[htb]
		\begin{center}
			\includegraphics[width=\columnwidth]{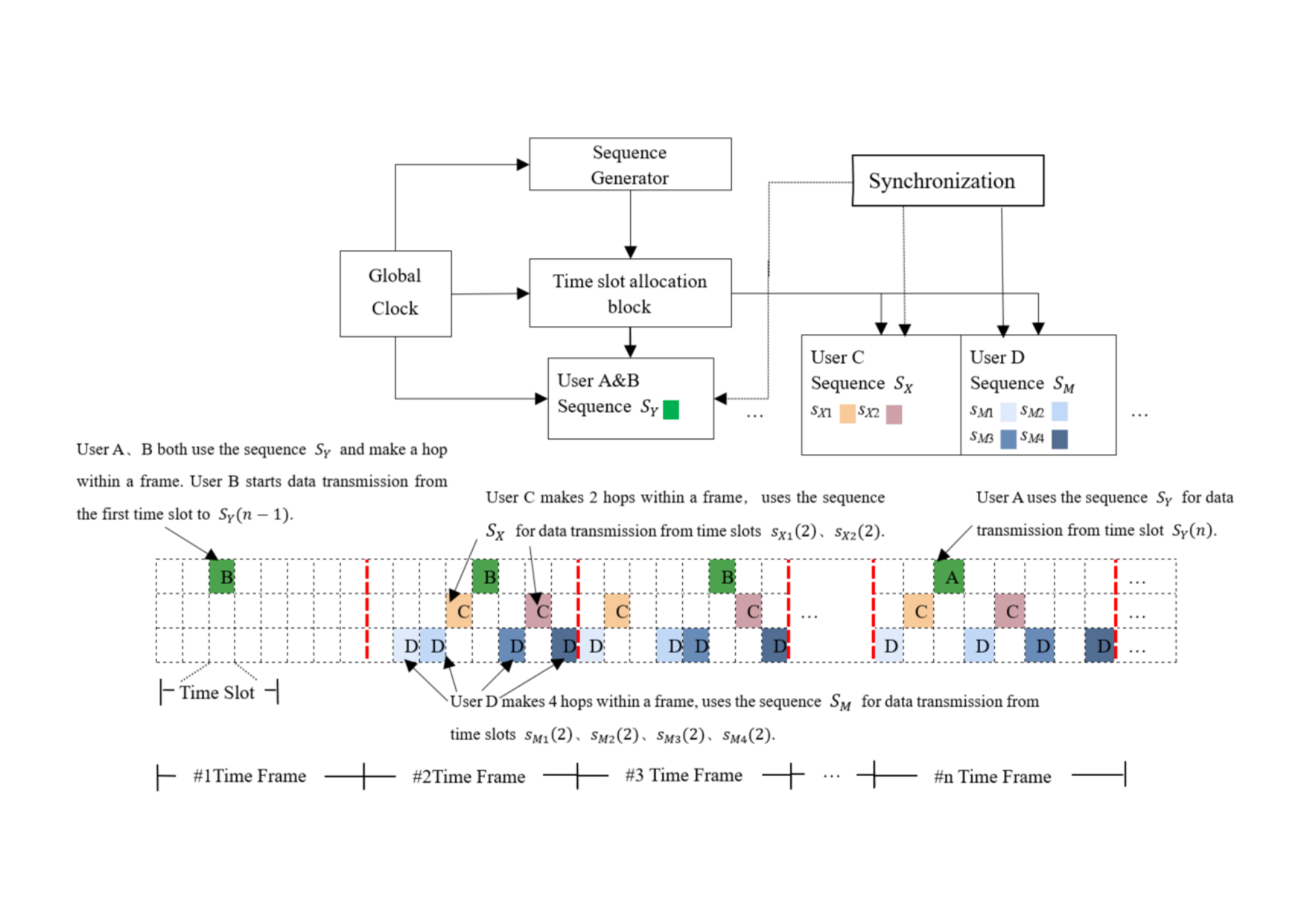}
		\end{center}
		\caption{Hierarchical Access Control Model.}
		\label{fig.newModel}
	\end{figure}
		To meet the demand of hierarchical access to the data link network, the HCSs should meet the following conditions:
	\begin{enumerate}
		\item[(1)] Every time slot in a frame can be used and the utilization of each time slot occurrences in a cycle would be equal;
		\item[(2)] The total number of time slots occupied by all users with different access levels should not be greater than the number of time slots in a time frame;
		\item[(3)] The length of sequence $\emph{l}$ should be large enough to ensure the randomness of the sequence;
		\item[(4)] The sequences should have enough randomness and complexity to improve the anti-interference ability and guarantee strictly conflict-free communication between nodes.
	\end{enumerate}
	\section{Theoretical Bound}\label{Bound}
	It is known in the research of frequency hopping sequence that the theoretical bound plays an important role in evaluating the performance of sequence sets.
	The sequence set is considered optimal only when its parameters meet the relevant theoretical bound.
	Therefore, we derived a new theoretical bound to better measure the excellence of HCS set in this section.
	From the definition of HCS set in section~\ref{def}, it can be seen that for a given number of time slots, user access level, and number of access levels, it is necessary to know the number of users with different access levels. So, we state the main theorem on the new bound on HCS.
	\begin{theorem}
		If there are $\emph{t}$ time slots in a time frame and $\lambda$ levels, for any level $i(0 \leq i \leq \lambda-1)$ and the number of users with level $i$, we have
		\begin{equation}
			\label{bound1}
			\begin{aligned}
				\sum^{\lambda-1}_{i=0}r_i\cdot u_i \leq \emph{t}
			\end{aligned}
		\end{equation}
	\end{theorem}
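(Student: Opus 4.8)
The plan is to prove \eqref{bound1} by a direct pigeonhole argument carried out inside a single time frame, using the zero-correlation property of an HCS set to guarantee that distinct users never compete for the same slot. So the strategy is: (i) count, per frame, how many (slot, user) incidences the scheme requires; (ii) show these incidences land in pairwise distinct slots; (iii) compare with the number of slots available, namely $\emph{t}$.

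For step (i), fix one time frame, which consists of $\emph{t}$ time slots. By the definition of the level value, a user of level $i$ transmits exactly $r_i$ times within that frame, i.e.\ it is assigned $r_i$ of the $\emph{t}$ slots; since the $u_i$ users of level $i$ are pairwise distinct, they jointly account for $r_i\cdot u_i$ incidences in the frame, and summing over the $\lambda$ levels gives $\sum_{i=0}^{\lambda-1} r_i\cdot u_i$ incidences in total.

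For step (ii), which is the crux, I would invoke the HCS condition directly: for any two sequences $\boldsymbol{x},\boldsymbol{y}\in\boldsymbol{S}$ we have $\emph{H}_{\boldsymbol{x},\boldsymbol{y}}(0)=\sum_{k=0}^{l-1}h(x_k,y_k)=0$, hence $x_k\neq y_k$ for every coordinate $k$; restricting to the coordinates that belong to the chosen frame, this says precisely that the slot sets used by any two users inside that frame are disjoint, which is exactly the strictly conflict-free requirement (4) of the model. Therefore the $\sum_{i=0}^{\lambda-1} r_i\cdot u_i$ incidences occupy that many pairwise different slots out of the $\emph{t}$ available, so $\sum_{i=0}^{\lambda-1} r_i\cdot u_i\le \emph{t}$, establishing \eqref{bound1}. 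It is then worth remarking that equality holds exactly when every slot of the frame is used, i.e.\ the optimal full-utilization situation singled out by condition (1).

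The obstacle I anticipate is bookkeeping rather than conceptual: one must pin down the correspondence between the coordinates $0,1,\dots,\emph{l}-1$ of the sequences and the slots of a given frame, and check that within a single frame a single user never reuses a slot, so that $r_i$ genuinely counts distinct slots before the clean counting above goes through. An essentially equivalent alternative is to double count over the whole period of length $\emph{l}$: by the equal-utilization property (1) each of the $\emph{t}$ slots occurs the same number of times per period, and comparing the total number of transmissions $\sum_{i=0}^{\lambda-1} r_i\cdot u_i$ required per frame with the $\emph{t}$ slots offered per frame yields the same inequality.
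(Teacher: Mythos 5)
Your proposal is correct in substance but follows a genuinely different route from the paper. You argue by direct pigeonhole within a single frame: each level-$i$ user occupies $r_i$ distinct slots, the zero Hamming correlation at delay $0$ (equivalently, the strictly conflict-free requirement of the model) forces the slot sets of distinct users to be pairwise disjoint, and hence the $\sum_{i=0}^{\lambda-1} r_i\cdot u_i$ occupied slots cannot exceed the $t$ available. The paper instead fixes a pair of levels $i,j$, writes a capacity expression for $u_i$ involving $d=\gcd(\lfloor t/r_i\rfloor,\lfloor t/r_j\rfloor)$ and floor/ceiling manipulations, derives the pairwise inequality $u_ir_i+u_jr_j\le t$, and then sums these inequalities over all pairs of levels. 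Your approach buys simplicity and transparency: it needs only the disjointness of slot assignments, not any divisibility or gcd structure. It is also more robust at the final step: summing the paper's pairwise inequalities over all $\binom{\lambda}{2}$ pairs gives a right-hand side of $\binom{\lambda}{2}t$, so dividing by $\lambda-1$ yields $\sum_i r_iu_i\le \frac{\lambda}{2}t$ rather than the claimed $\sum_i r_iu_i\le t$; your global counting argument reaches the stated bound directly without this aggregation issue. The one point you should make fully explicit (and you do flag it) is that the Hamming correlation here is taken between tuple-valued coordinates, so ``$H_{\boldsymbol{x},\boldsymbol{y}}(0)=0$'' must be read as componentwise disjointness of the $r_i$-tuples, and that the $r_i$ entries of a single user's tuple are themselves distinct slots; both are part of the model's conflict-free semantics, so the counting goes through.
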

	\begin{proof}
		For any two levels $i, j(0 \leq i \neq j \leq \lambda-1)$, the corresponding level values in a frame are $r_i, r_j$, and the number of users are $u_i, u_j$.
		Let $d=gcd(\lfloor\frac{t}{r_i}\rfloor,\lfloor\frac{t}{r_j}\rfloor)$. So the number of users with level $i$ that can be accommodated is
		\begin{equation}
			\begin{aligned}
				\nonumber u_i &= \lfloor\frac{t}{r_id}\rfloor \cdot(d-\lceil \frac{u_ir_id}{t} \rceil)\\
				&\leq \frac{t}{r_jd}\cdot(d-\frac{u_ir_id}{t})\\
				&=\frac{t}{r_j}-\frac{u_ir_i}{r_j}
			\end{aligned}
		\end{equation}
		After the transfer of both sides of the equal sign, we have
		\begin{equation}\label{bound2}
			u_ir_i+u_jr_j \leq t
		\end{equation}
		Let $0 \leq i \leq \lambda-1$, we obtain a series of inequalities in form of $(\ref{bound2})$. Since index of each access levels occurs $(\lambda-1)$ times, we have
		\begin{equation}
			\label{bound3}
			(\lambda-1) \sum_{i=0}^{\lambda-1}r_i\cdot u_i \leq \emph{t}(\lambda-1)
		\end{equation}
		Then $(\ref{bound1})$ is proved from dividing both sides of $(\ref{bound3})$ by $(\lambda-1)$.
	\end{proof}
	There are two examples of this bound shown in Fig.\ref{fig:main}. It shows the feasible number and optimal number of users of three different access levels with respect to the new bound. All optimal numbers are on the plane of the upper bound.
	In Fig.\ref{a}, the values of the x,y, and z coordinates indicate the feasible and maximum number of users with three different levels.
	The points that fall on the plane denote that the value obtained by multiplying each of the three coordinate values of the point by the corresponding level values is equal to $\emph{t}$. These points are the optimal number of users with three different access levels.
	\begin{figure}[htb]
		\begin{center}
			\subfloat[\label{a} Access levels $r_0,r_1,r_2=1,2,6$]{\includegraphics[width=\columnwidth]{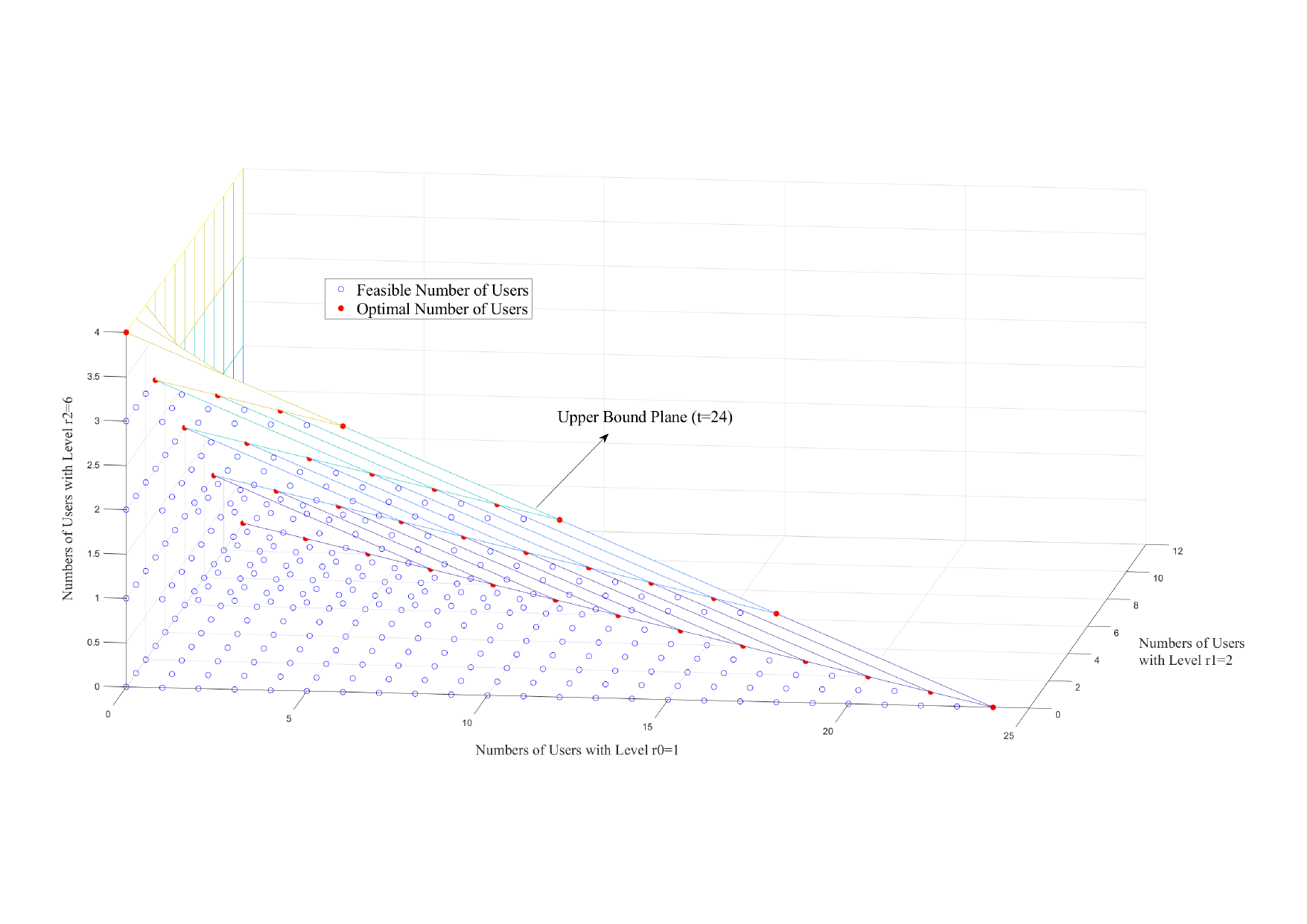}}
			\hfill
			\subfloat[\label{b} Access levels $r_0,r_1,r_2=3,5,15$]{\includegraphics[width=\columnwidth]{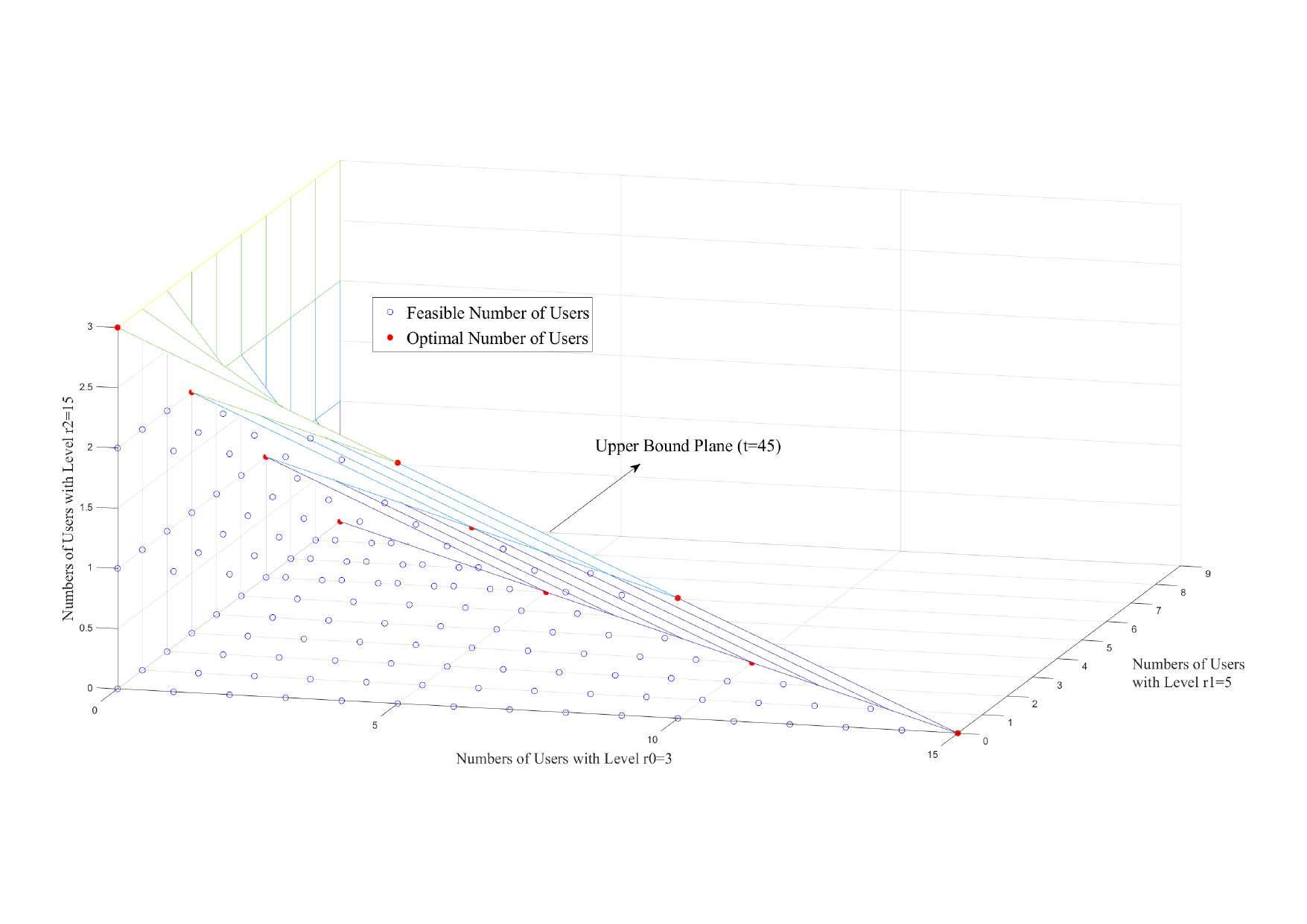}}
		\end{center}
		\caption{Feasible and optimal number of users with three access levels}
		\label{fig:main}
	\end{figure}
	\section{Constructions of HCS Set}\label{Cons}
	In this section, we present two constructions of hierarchical control sequence sets.
	These two constructions of HCS sets, while ensuring conflict-free user access to data links as implemented by existing control sequences, address the user differentiated access communication issue raised in Section~\ref{intro}.
	Let $\mathcal{T}=\{\emph{T}_0,\emph{T}_1,...$$,\emph{T}_{t-1}\}$ be a time slot set with size $|\mathcal{T}|=t$.
	\subsection{The First Construction}
	This construction is proposed based on a condition that multiplicative relationship between each access level and the number of time slots.
	Also, each level is multiplied by the maximum level.
	\emph{Step 1}: Let the level values are listed in ascending order(i.e., $r_0<r_1<r_2<...<r_{\lambda-1}$).
	Let $\emph{R}=r_{\lambda-1}$, $\mathcal{G}_m$ be an abelian group with size $\frac{t}{R}$,
	and the set $\boldsymbol{A}$ = \{ $\boldsymbol{a}^0,\boldsymbol{a}^1,...,\boldsymbol{a}^{\frac{t}{R}!-1}|\boldsymbol{a}^\gamma= \big(\textit{a}^\gamma(0),\textit{a}^\gamma(1),...,\textit{a}^\gamma(\frac{t}{R}-1) \big),0\leq\gamma\leq(\frac{t}{R}!-1) $\} consists of a total permutation of elemets of the group $\mathcal{G}_m$.
	Choose a pseudo-random sequence $\boldsymbol{b}^m=\big(\textit{b}^m(0),\textit{b}^m(1),...,\textit{b}^m(tR-1) \big)$ over the set $\{0,1,...,(\frac{t}{R}!-1)\}$.
	For example of $t=24$, $\lambda=3$, $r_0$ = 2, $u_0$ = 3, $r_1$ = 3, $u_1$ = 4, $r_2$ = $R$ = 6, $u_2$ = 1. For $\mathcal{G}_m$ = \{0,1,2,3\}, Let $\boldsymbol{A}$ = \{$\boldsymbol{a}^0, \boldsymbol{a}^1,..., \boldsymbol{a}^{23}$\}, and the total number of permutations are shown in Table~\ref{Permutation}.
	\begin{table}[t]
		\renewcommand{\arraystretch}{1.5}
		\begin{center}
			\caption{The total permutations of elemets over $\boldsymbol{A}$}
			\label{Permutation}
			\resizebox{\linewidth}{!}{
				\begin{tabular}{llllll}
					\hline
					$\boldsymbol{a}^0=(0,1,2,3)$   &   $\boldsymbol{a}^6=(1,0,2,3)$   &   $\boldsymbol{a}^{12}=(2,0,1,3)$   &   $\boldsymbol{a}^{18}=(3,0,1,2)$\\
					$\boldsymbol{a}^1=(1,0,2,3)$   &   $\boldsymbol{a}^7=(1,0,3,2)$   &   $\boldsymbol{a}^{13}=(2,0,3,1)$   &   $\boldsymbol{a}^{19}=(3,0,2,1)$\\
					$\boldsymbol{a}^2=(0,2,1,3)$   &   $\boldsymbol{a}^8=(1,2,0,3)$   &   $\boldsymbol{a}^{14}=(2,1,0,3)$   &   $\boldsymbol{a}^{20}=(3,1,0,2)$\\
					$\boldsymbol{a}^3=(0,2,3,1)$   &   $\boldsymbol{a}^9=(1,2,3,0)$   &   $\boldsymbol{a}^{15}=(2,1,3,0)$   &   $\boldsymbol{a}^{21}=(3,1,2,0)$\\
					$\boldsymbol{a}^4=(0,3,1,2)$   &   $\boldsymbol{a}^{10}=(1,3,0,2)$&   $\boldsymbol{a}^{16}=(2,3,0,1)$   &   $\boldsymbol{a}^{22}=(3,2,0,1)$\\
					$\boldsymbol{a}^5=(0,3,2,1)$   &   $\boldsymbol{a}^{11}=(1,3,2,0)$&   $\boldsymbol{a}^{17}=(2,3,1,0)$   &   $\boldsymbol{a}^{23}=(3,2,1,0)$\\
					\hline
				\end{tabular}
			}
		\end{center}
	\end{table}
	Choose a pseudo-random sequence $\boldsymbol{b}^m=(19,16,9,18,7,...,22,5)$.
	For the users with $r_0$ and $r_1$, choose the pseudo-random sequences over the abelian group $\mathcal{G}_0$ and $\mathcal{G}_1$ respectively.
	The pseudo-random sequences over $\mathcal{G}_0$:
	\[b^{0,0}=(1,2,1,1,0,...,2,0)\]
	\[b^{0,1}=(0,0,2,0,1,...,1,1)\]
	\[b^{0,2}=(2,1,0,2,2,...,0,2)\]
	The pseudo-random sequences over $\mathcal{G}_1$:
	\[b^{1,0}=(0,0,1,0,1,...,1,1)\]
	\[b^{1,1}=(1,1,0,1,0,...,0,0)\]
	\emph{Step 2}: For any user with level $i(0 \leq i \leq \lambda-1)$, we can generate a HCS, and then construct the HCS set with $\sum_{i=0}^{\lambda-1}u_i$ sequences of length $tR$ as $\boldsymbol{S}=\{\textbf{s}^i_j=\boldsymbol{s}^i_j(\alpha)|0 \leq i \leq \lambda-1, 0 \leq j \leq u_i-1, 0 \leq \alpha \leq tR-1 \}$:
	\begin{equation}
		\begin{aligned}
			\nonumber \boldsymbol{s}^i_j(\alpha) &= \big(z^i_{j,0}(\alpha),z^i_{j,1}(\alpha),...,z^i_{j,r_i-1}(\alpha)\big)\\
			&=\big(z^i_{j,\theta}(\alpha)|0 \leq \theta \leq r_i-1\big)
		\end{aligned}
	\end{equation}
	\begin{eqnarray}\label{ConsA1}
		z^i_{j,\theta}(\alpha) &=& \textit{a}^{b^m(\alpha)}\big(\langle \beta+\omega+ \lfloor \frac{j}{\eta_i} \rfloor \rangle_{\frac{t}{R}} \big) + \beta \cdot \frac{t}{R}
	\end{eqnarray}
	where $\eta_i=\frac{R}{r_i}$, $\beta=b^{i,\langle j \rangle_{\eta_i}}(\alpha)+\theta\cdot \eta_i$, $\omega=\frac{\sum^{i-1}_{\zeta=0}u_\zeta\cdot r_\zeta}{R}$. $\boldsymbol{b}^{i,0}=\big(b^{i,0}(0),b^{i,0}(1),...,b^{i,0}(tR-1)\big)$ is a pseudo-random sequences over an abelian group $\mathcal{G}_i$ with size $\eta_i$, and $\boldsymbol{b}^{i,\varepsilon}=\langle \boldsymbol{b}^{i,0}+1 \rangle_{\eta_i}, \varepsilon \leq \eta_i-1$.
	For the example in Step 1, applying (\ref{ConsA1}), the (144,8,24;8,3) HCS set $\boldsymbol{S}$ is obtained and shown in Table~\ref{HCS set1}.
	\begin{table*}[t]
		\renewcommand{\arraystretch}{1.5}
		\begin{center}
			\caption{The hierarchical control sequence set1}
			\label{HCS set1}
			\begin{tabular}{l}
				\hline
				$\textbf{s}^0_0=((4,19),(8,23),(6,17),...,(8,22),(0,13))$\\
				$\textbf{s}^0_1=((3,13),(2,13),(11,22),...,(6,19),(7,16))$\\
				$\textbf{s}^0_2=((10,20),(7,18),(1,12),...,(3,13),(10,23))$\\
				\hline
				$\textbf{s}^1_0=((0,9,16),(3,9,19),(7,13,23),...(4,15,20),(6,12,22))$\\
				$\textbf{s}^1_1=((6,15,22),(4,14,20),(2,8,18),...,(2,9,18),(3,9,19))$\\
				$\textbf{s}^1_2=((2,11,18),(0,10,16),(4,14,20),...,(5,14,21),(5,15,21))$\\
				$\textbf{s}^1_3=((5,12,21),(5,15,21),(3,9,19),...,(0,11,16),(2,8,18))$\\
				\hline
				$\textbf{s}^2_0=((1,7,8,14,17,23),(1,6,11,12,17,22),(0,5,10,15,16,21),...,(1,7,10,12,17,23),(1,4,11,14,17,20))$\\
				\hline
			\end{tabular}
		\end{center}
	\end{table*}
	\begin{theorem}
		The sequence set $\boldsymbol{S}$ with parameters $(tR,\sum_{i=0}^{\lambda-1}u_i,t;\sum_{i=0}^{\lambda-1}u_i,\lambda)$ obtained from above construction is a HCS set and each time slot $\emph{T}_\tau$ $(\emph{T}_\tau \in \mathcal{T},0\leq \tau \leq t-1)$ appears $\emph{t}R$ times in $\boldsymbol{S}$.
	\end{theorem}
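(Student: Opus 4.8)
The plan is to prove the two claims of the theorem in order: first that $\boldsymbol{S}$ is conflict-free at zero delay (the HCS property), then that each time slot occurs exactly $tR$ times; the second part will drop out of the first once we use the identity $\sum_{i=0}^{\lambda-1}u_i r_i = t$, which the construction's parameters satisfy since it is exactly the bound (\ref{bound1}) attained with equality.

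The key preliminary is a \emph{block decomposition} of the entries produced by (\ref{ConsA1}). Fix a position $\alpha$ and an admissible triple $(i,j,\theta)$, and set $\beta = b^{i,\langle j\rangle_{\eta_i}}(\alpha) + \theta\eta_i$ as in the construction. Because $b^{i,\cdot}(\alpha)\in\{0,\dots,\eta_i-1\}$, $0\le\theta\le r_i-1$, and $r_i\eta_i = R$, one has $0\le\beta\le R-1$; also $a^{b^m(\alpha)}(\cdot)$ takes values in $\mathcal{G}_m\subseteq\{0,\dots,\tfrac{t}{R}-1\}$. Hence $z^i_{j,\theta}(\alpha) = a^{b^m(\alpha)}\!\big(\langle\beta+\omega+\lfloor j/\eta_i\rfloor\rangle_{t/R}\big)+\beta\cdot\tfrac{t}{R}$ lies in $\{0,1,\dots,t-1\}$, and dividing it by $t/R$ returns quotient $\beta$ and remainder $a^{b^m(\alpha)}(\langle\beta+\omega+\lfloor j/\eta_i\rfloor\rangle_{t/R})$. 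Thus two entries $z^i_{j,\theta}(\alpha)$, $z^{i'}_{j',\theta'}(\alpha)$ are equal iff $\beta=\beta'$ and, since $a^{b^m(\alpha)}$ is a permutation of $\mathcal{G}_m$, $\langle\beta+\omega+\lfloor j/\eta_i\rfloor\rangle_{t/R}=\langle\beta'+\omega'+\lfloor j'/\eta_{i'}\rfloor\rangle_{t/R}$.

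For the HCS property I would fix $\alpha$ and show all entries $z^i_{j,\theta}(\alpha)$ are pairwise distinct, arguing by cases. If $\beta\ne\beta'$ the two entries lie in different length-$(t/R)$ blocks, hence differ; this already covers the case $i=i'$, $j=j'$, $\theta\ne\theta'$, since then $\beta-\beta'=(\theta-\theta')\eta_i\ne0$. If $\beta=\beta'$ and $i=i'$, then $b^{i,\langle j\rangle_{\eta_i}}(\alpha)+\theta\eta_i=b^{i,\langle j'\rangle_{\eta_i}}(\alpha)+\theta'\eta_i$ with $b^{i,\cdot}(\alpha)<\eta_i$ forces $\theta=\theta'$ and $b^{i,\langle j\rangle_{\eta_i}}(\alpha)=b^{i,\langle j'\rangle_{\eta_i}}(\alpha)$; since $b^{i,\varepsilon}=\langle b^{i,0}+\varepsilon\rangle_{\eta_i}$ this yields $\langle j\rangle_{\eta_i}=\langle j'\rangle_{\eta_i}$, and then the index equality collapses to $\lfloor j/\eta_i\rfloor\equiv\lfloor j'/\eta_i\rfloor\pmod{t/R}$. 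Since $0\le\lfloor j/\eta_i\rfloor,\lfloor j'/\eta_i\rfloor<u_i/\eta_i=u_ir_i/R\le t/R$, the congruence forces $\lfloor j/\eta_i\rfloor=\lfloor j'/\eta_i\rfloor$, hence $j=j'$ and the two triples coincide. If $\beta=\beta'$ but $i\ne i'$, say $i<i'$, then from the definition of $\omega$ the integer $\omega+\lfloor j/\eta_i\rfloor$ lies in $\big[\tfrac1R\sum_{\zeta<i}u_\zeta r_\zeta,\ \tfrac1R\sum_{\zeta\le i}u_\zeta r_\zeta\big)$ and $\omega'+\lfloor j'/\eta_{i'}\rfloor$ lies in the analogous interval for $i'$; these intervals are disjoint because $i<i'$, and both sit inside $[0,t/R)$ by $\sum_\zeta u_\zeta r_\zeta=t$, so the two quantities cannot be congruent modulo $t/R$ — contradicting $z^i_{j,\theta}(\alpha)=z^{i'}_{j',\theta'}(\alpha)$. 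Hence no collision occurs at any position, so $H_{\boldsymbol{x},\boldsymbol{y}}(0)=0$ for all distinct $\boldsymbol{x},\boldsymbol{y}\in\boldsymbol{S}$ and $\boldsymbol{S}$ is a HCS set.

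Finally, the uniform-occurrence claim follows quickly: at each of the $tR$ positions $\alpha$, the total number of entries contributed by all users is $\sum_{i=0}^{\lambda-1}u_i r_i=t$, and the previous step shows they are distinct elements of $\{0,1,\dots,t-1\}$, hence they are exactly $\{T_0,\dots,T_{t-1}\}$ — every time slot appears once at each position. Summing over the $tR$ positions gives that each $T_\tau$ appears $tR$ times in $\boldsymbol{S}$, consistent with the global count $\sum_{i,j}tR\cdot r_i=t^2R$ shared among $t$ slots. I expect the main obstacle to be the bookkeeping in the two $\beta=\beta'$ subcases — confirming that $\omega$ and the interval endpoints are honest integers under the construction's divisibility hypotheses, and checking that the floor/modulo range bounds are tight enough to upgrade "congruent modulo $t/R$" to "equal"; once the block decomposition and the optimality identity are set up, the rest is routine.
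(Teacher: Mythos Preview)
Your proposal is correct and follows essentially the same strategy as the paper: show that for each fixed position $\alpha$ all entries $z^i_{j,\theta}(\alpha)$ are pairwise distinct via a case analysis on the construction formula, then deduce the uniform time-slot count from $\sum_i u_i r_i = t$. Your explicit block decomposition (separating the quotient $\beta$ from the permutation-index remainder) and the disjoint-interval argument for $i\neq i'$ make the casework considerably tighter than the paper's version, which only sketches the $i=i'$ case and handles the remaining cases with ``similarly.''
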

	\begin{proof}
		It can be obtained from the above construction that the sequence set $\boldsymbol{S}$ is with $t$ sequences of length $tR$ over a time slot set of size $t$.
		Then we will prove the Hamming correlation $H(\boldsymbol{S})=0$. For any two sequences $\textbf{s}_j^i,\textbf{s}_{j'}^{i'} \in \boldsymbol{S}$, the Hamming correlation can be expressed as follows:
		\[H_{\textbf{s}^i_j,\textbf{s}_{j'}^{i'}}(0)=\sum^{tR-1}_{l=0}h\big(\textbf{s}_j^i(l),\textbf{s}_{j'}^{i'}(l)\big)\]
		We can prove the above equation by dividing it into the following three cases:
		\begin{enumerate}
			\item[(1)] $\textbf{s}_j^i \neq \textbf{s}_{j'}^i$ for any $j_i \neq j'$;
			\item[(2)] $\textbf{s}_j^i \neq \textbf{s}_j^{i'}$ for any $i \neq i'$;
			\item[(3)] $\textbf{s}_j^i \neq \textbf{s}_{j'}^{i'}$ for any $i \neq i', j \neq j'$.
		\end{enumerate}
		According to $($\ref{ConsA1}$)$, we have
		\begin{equation}
			\begin{aligned}
				\nonumber z^i_{j,\theta}(\alpha) =& \textit{a}^{b^m(\alpha)}\big(\beta+\omega+\lfloor \frac{j}{\eta_i} \rfloor \rangle_{\frac{t}{R}} \big)+\beta \cdot \frac{t}{R}
			\end{aligned}
		\end{equation}
		\begin{equation}
			\begin{aligned}
				\nonumber z^i_{j',\theta'}(\alpha) =& \textit{a}^{b^m(\alpha)}\big(\beta+\omega+\lfloor \frac{j'}{\eta_i} \rfloor \rangle_{\frac{t}{R}} \big)+\beta \cdot \frac{t}{R}
			\end{aligned}
		\end{equation}
		Since $j \neq j'$, whether $\theta=\theta'$ or not, we have
		\begin{equation}
			\begin{aligned}
				\nonumber &b^{i,\langle j \rangle_{\eta_i}}(\alpha)+\theta \cdot \eta_i\\
				&\neq b^{i,\langle j' \rangle_{\eta_i}}(\alpha)+\theta \cdot \eta_i
			\end{aligned}
		\end{equation}
		So we can obtain that
		\[z^i_{j,\theta}(\alpha)\neq z^i_{j',\theta'}(\alpha)\]
		It is obvious that:
		\[\textbf{s}^i_j \neq \textbf{s}^{i'}_{j'}\]
		for any $i,i'(0 \leq i \neq i' \leq \lambda-1)$.
		For the second and third case, we can prove them according to the equation$($\ref{ConsA1}$)$ similarly.
		So we get the Hamming correlation $H(\boldsymbol{S})=0$.\\
		Thus, we can obtain that for any two subsequences $\boldsymbol{z}^i_{j,\theta},\boldsymbol{z}^{i'}_{j',\theta'}$, we have
		\[z^i_{j,\theta}(\alpha)\neq z^{i'}_{j',\theta'}(\alpha)\]
		There are $t$ subsequences in $\boldsymbol{S}$, so each time slot $\emph{T}_\tau$ $(\emph{T}_\tau \in \mathcal{T},0\leq \tau \leq t-1)$ occurs once respectively in the corresponding position of each of these $t$ subsequences.
		Since the length of these subsequences is $tR$, we can obtain that each time slot $\emph{T}_\tau$ $(\emph{T}_\tau \in \mathcal{T},0\leq \tau \leq t-1)$ appears $tR$ times in $\boldsymbol{S}$.
	\end{proof}
	\begin{theorem}
		The HCS set $\boldsymbol{S}$ obtained from the above construction is optimal with respect to the new bound $(\ref{bound1})$.
	\end{theorem}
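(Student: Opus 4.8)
The plan is to show that the parameters of $\boldsymbol{S}$ meet the bound $(\ref{bound1})$ with equality, since here ``optimal'' means precisely that $\sum_{i=0}^{\lambda-1} r_i u_i = t$. The argument is a double count of the total number of time-slot occurrences appearing across all coordinates of all sequences of $\boldsymbol{S}$.

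First I would invoke the preceding theorem, which already establishes that $\boldsymbol{S}$ is a genuine $(tR,\sum_{i=0}^{\lambda-1}u_i,t;\sum_{i=0}^{\lambda-1}u_i,\lambda)$ HCS set and, crucially, that every time slot $T_\tau$ with $0\le\tau\le t-1$ occurs exactly $tR$ times in $\boldsymbol{S}$. Summing over all $t$ slots, the total number of slot occurrences in $\boldsymbol{S}$ is $t\cdot tR$.

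Next I would count the same quantity sequence by sequence. Each user of level $i$ owns one sequence $\boldsymbol{s}^i_j$ of length $tR$, and by construction each coordinate $\boldsymbol{s}^i_j(\alpha)=\big(z^i_{j,0}(\alpha),\dots,z^i_{j,r_i-1}(\alpha)\big)$ is an $r_i$-tuple whose entries are pairwise distinct time slots (this distinctness is exactly the within-sequence part of the $H(\boldsymbol{S})=0$ argument in the previous proof, where the subsequences $\boldsymbol{z}^i_{j,\theta}$ and $\boldsymbol{z}^i_{j,\theta'}$ with $\theta\neq\theta'$ are shown to disagree in every position). Hence user $(i,j)$ contributes $r_i\cdot tR$ occurrences, and there are $u_i$ such users at level $i$, giving $\big(\sum_{i=0}^{\lambda-1} r_i u_i\big)\cdot tR$ occurrences in all. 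Equating the two counts and cancelling $tR$ yields $\sum_{i=0}^{\lambda-1} r_i u_i = t$, so $\boldsymbol{S}$ attains $(\ref{bound1})$ and is therefore optimal.

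The only point needing care — and where I expect the real content of the argument to sit — is making the two counting claims airtight: on one side, that no slot is ever repeated inside a single coordinate block $\boldsymbol{s}^i_j(\alpha)$, so that a level-$i$ user really accounts for $r_i$ distinct slots per frame and not fewer; on the other side, that the ``each slot appears $tR$ times'' conclusion is used with the correct multiplicity, which in turn relies on the fact (also from the previous proof) that the relevant $t$ subsequences jointly range over all of $\mathcal{T}$ in every position. Both facts are already available for this construction, so no new estimates are required and the remaining work is just to assemble them into the stated equality.
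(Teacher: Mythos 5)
Your argument is logically sound as a deduction from the preceding theorem, but it takes a genuinely different route from the paper. The paper does not double count at all: it argues directly from the divisibility assumptions of the construction ($R\mid t$ and $r_i\mid R$, so $\eta_i=\frac{R}{r_i}$ is an integer) that the numbers of users are chosen so that $\sum_{i=0}^{\lambda-1}\frac{u_i}{\eta_i}=\frac{t}{R}$, which upon substituting $\eta_i=\frac{R}{r_i}$ gives $\sum_{i=0}^{\lambda-1}u_i r_i=t$ immediately. In other words, the paper's proof locates the optimality in the way the input parameters $u_i$ are selected, whereas yours extracts it a posteriori from the combinatorial structure of the finished sequence set. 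Your double count is clean, and you are actually working harder than necessary in one place: the per-user contribution of $r_i\cdot tR$ entries holds simply because each of the $tR$ coordinates is an $r_i$-tuple, so the pairwise distinctness of the entries within a coordinate block is irrelevant to the total-occurrence count (it matters only for the other side of the count, where the $t$ subsequences must exhaust $\mathcal{T}$ at every position).

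One caveat you should be aware of: your argument leans entirely on the statement that every slot appears exactly $tR$ times in $\boldsymbol{S}$, and the paper's proof of that statement is itself obtained by asserting that there are exactly $t$ subsequences $\boldsymbol{z}^i_{j,\theta}$ in $\boldsymbol{S}$ --- which is precisely the claim $\sum_{i=0}^{\lambda-1}u_i r_i=t$ that you are trying to establish. So while your deduction is formally valid if the previous theorem is taken as a black box, when unwound it is circular: it does not provide an independent verification of the equality, it merely reflects it back. The paper's divisibility argument (or simply observing that the construction requires $\sum u_i r_i = t$ as a feasibility condition on its inputs) is what actually carries the content, and you would need to include some version of it to make your proof self-supporting.
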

	\begin{proof}
		For the given access levels and the number of time slots, if a set of the number of users such that the number of time slots occupied by all users in a frame is equal to $t$, then it is optimal on the theoretical bound with respect to the number of users.
		Assume that there are $\lambda$ access levels, the maximum level value is $R$ and the corresponding number of users is $u_R$. Since $R$ and the number of time slots $t$ are multiplicative $(i.e.R|t)$, so the maximum number of users with $R$ is equal to $\frac{t}{R}$.
		Meanwhile, any level $i(0 \leq i \leq \lambda-1)$ and $R$ are multiplicative, so we can find the number $u_i$ of users with any level $i$ such that
		\[\sum_{i=0}^{\lambda-1}\frac{u_i}{\eta_i}=\frac{t}{R}\]
		by $\eta_i=\frac{R}{r_i}$, we can obtain that
		\[\sum_{i=0}^{\lambda-1}u_i\cdot r_i=t\]
		It is clearly the hierarchical control sequence set obtained from above construction is optimal with respect to the new theoretical bound.
	\end{proof}
	\begin{algorithm}[h]
		\caption{The First Construction}
		\label{alg:A}
		\begin{algorithmic}[1]
			\REQUIRE the number of time slots $t$, the number of access levels $\lambda$, the level values $r_i(0\leq i \leq \lambda)$, the number of users $u_i$ with level $i$, $\boldsymbol{A}=\{\boldsymbol{a}^0,\boldsymbol{a}^1,...,$ $\boldsymbol{a}^{\frac{t}{R}!-1}|\boldsymbol{a}^\gamma= \big(\textit{a}^\gamma(0),\textit{a}^\gamma(1),...,\textit{a}^\gamma(\frac{t}{R}-1) \big),0\leq\gamma\leq(\frac{t}{R}!-1)\}, \boldsymbol{b}^m=\big(\textit{b}^m(0),\textit{b}^m(1),...,\textit{b}^m(tR-1) \big),\boldsymbol{b}^{i,\varepsilon}=\langle \boldsymbol{b}^{i,0}+1 \rangle_{\eta_i}, \varepsilon \leq \eta_i-1$
			\ENSURE a $(tR,\sum_{i=0}^{\lambda-1}u_i,t;\sum_{i=0}^{\lambda-1}u_i,\lambda)$ HCS set $\boldsymbol{S}$
			\STATE Initial $R=max\{r_i\}$,$\eta_i=\frac{R}{r_i}$, $\omega=\frac{\sum^{i-1}_{\zeta=0}u_\zeta\cdot r_\zeta}{R}$, $\beta=b^{i,\langle j \rangle_{\eta_i}}(\alpha)+\theta\cdot \eta_i$, $\textbf{z}^i_{j,\theta}=\phi$, $\boldsymbol{S}=\{\textbf{s}^i_j\}=\phi$;
			\FOR{$\alpha=0$ to $tR-1$}
			\FOR{$\theta=0$ to $r_i-1$}
			\STATE $z^i_{j,\theta}(\alpha)=\beta\cdot \frac{t}{R}+a^{b^m(\alpha)}\big(\langle \beta+\omega+ \lfloor \frac{j}{\eta_i} \rfloor \rangle_{\frac{t}{R}} \big);$
			\IF{$\alpha==tR-1$}
			\STATE $\textbf{s}^i_j=\textbf{z}^i_{j,\theta};$
			\STATE Update $\boldsymbol{S}$;
			\ENDIF
			\ENDFOR
			\ENDFOR
		\end{algorithmic}
	\end{algorithm}
	\subsection{The Second Construction}
	\emph{Step 1}: Define the initial state of the control sequence set as follows:
	\[\boldsymbol{C}^0=\Big\{\boldsymbol{c}^0_k=\big(\textit{c}^0_k(0),\textit{c}^0_k(1),...,\textit{c}^0_k(t-1)\big):0\leq k \leq t-1 \Big\}\]
	where $\textit{c}^0_k(b_0)=k+b_0$ mod \emph{t}, $0 \leq b_0 \leq t-1.$
	For example $t=8$, the number of access levels $\lambda=3$. Let $r_0=1, u_0=1; r_1=3, u_1=1; r_2=4, u_2=1$. By Step 1 we obtain the initial set of sequence set as follows:
	\[\boldsymbol{c}^0_0=(0,1,2,3,4,5,6,7)\]
	\[\boldsymbol{c}^0_1=(1,2,3,4,5,6,7,0)\]
	\[\boldsymbol{c}^0_2=(2,3,4,5,6,7,0,1)\]
	\[\boldsymbol{c}^0_3=(3,4,5,6,7,0,1,2)\]
	\[\boldsymbol{c}^0_4=(4,5,6,7,0,1,2,3)\]
	\[\boldsymbol{c}^0_5=(5,6,7,0,1,2,3,4)\]
	\[\boldsymbol{c}^0_6=(6,7,0,1,2,3,4,5)\]
	\[\boldsymbol{c}^0_7=(7,0,1,2,3,4,5,6)\]
	\emph{Step 2}: Choose \emph{g} $\in \mathbb{Z}^*_t$ such that \emph{g} has the largest multiplicative order in $\mathbb{Z}^*_t$. Assume that $ord(g)=d$, note that \emph{d} is upper bounded by $\phi(t)$, where $\phi$ is the Euler function. In general, we have $\emph{d}=\Omega(t)$.
	For the example in Step 1, choose $g=3\in \mathbb{Z}^*_8$ such that $g$ has the largest multiplicative order $ord(g)=d=\phi(8)=4 \in \mathbb{Z}^*_8$.
	\emph{Step 3}: For any user with level $i(0 \leq i \leq \lambda-1)$, we can generate a HCS by \emph{n}-rounds of iterative operation, and then construct the desired HCS set with $\boldsymbol{S}=\{\textbf{s}_j^i=\boldsymbol{s}_j^i(b_n)|0 \leq i \leq \lambda-1, 0 \leq j \leq u_i-1 \}$:
	\begin{equation}
		\begin{aligned}
			\nonumber \boldsymbol{s}_j^i(b_n) &= \big(z^i_{j,0}(b_n),z^i_{j,1}(b_n),...,z^i_{j,r_i-1}(b_n)\big)\\
			&=\big(z^i_{j,\theta}(b_n)|0 \leq \theta \leq r_i-1\big)
		\end{aligned}
	\end{equation}
	\begin{equation}\label{rule}
		\begin{aligned}
			z^i_{j,\theta}(b_n) &= c^n_{\omega+j\cdot r_i+\theta}(b_n)\\
			&=\textit{g}^{a_n}\textit{g}^{a_{n-1}}\cdot \cdot \cdot\textit{g}^{a_1}\textit{c}^0_{\langle \omega+j\cdot r_i+\theta+a_1 \rangle_t}(b_0)~~mod~~\textit{t}
		\end{aligned}
	\end{equation}
	where $\omega=\frac{\sum^{i-1}_{\zeta=0}u_\zeta\cdot r_\zeta}{R}$, $b_n=a_n\cdot d^{n-1}t+b_{n-1}$, $0\leq a_n \leq d-1, 0\leq b_n \leq d^nt-1$.
	For the example in Step 2, choose $\emph{n}=2$, construct the control sequence set $\boldsymbol{C}^2=\{\boldsymbol{c}^2_k:0\leq k \leq 7\}$ by 2-round iterations,
	\[\boldsymbol{c}^2_k=(3^0\boldsymbol{c}^1_k,3^1\boldsymbol{c}^1_k,3^2\boldsymbol{c}^1_k,3^3\boldsymbol{c}^1_k)\]
	\[=(3^{\langle 0+0 \rangle_4}\boldsymbol{c}^0_k,3^{\langle 0+1 \rangle_4}\boldsymbol{c}_{\langle k+1 \rangle_8}^0,3^{\langle 0+2 \rangle_4}\boldsymbol{c}_{\langle k+2 \rangle_8}^0,\dots\]
	\[3^{\langle 3+2 \rangle_4}\boldsymbol{c}_{\langle k+2 \rangle_8}^0,3^{\langle 3+3 \rangle_4}\boldsymbol{c}_{\langle k+3 \rangle_8}^0)\]
	According to $($\ref{rule}$)$, construct the desired $(128,3,8;3,3)$ HCS set $\boldsymbol{S}$, shown in Table~\ref{HCS set2}.
	\begin{table*}[t]
		\renewcommand{\arraystretch}{1.5}
		\begin{center}
			\caption{The hierarchical control sequence set2}
			\label{HCS set2}
			\begin{tabular}{l}
				\hline
				$\textbf{s}^0_0=(0,1,2,3\dots,1,4,7,\dots,0,1,2)$\\
				\hline
				$\textbf{s}_0^1=((1,2,3),(2,3,4),(3,4,5),(4,5,6),\dots,(4,7,2),(7,2,5),(2,5,0),\dots,(1,2,3),(2,3,4),(3,4,5))$\\
				\hline
				$\textbf{s}_0^2=((4,5,6,7),(5,6,7,0),(6,7,0,1),(7,0,1,2),\dots,(5,0,3,6),(0,3,6,1),(3,6,1,4),\dots(4,5,6,7),(5,6,7,0),(6,7,0,1))$\\
				\hline
			\end{tabular}
		\end{center}
	\end{table*}
	\begin{theorem}\label{timeB}
		Each time slot $\emph{T}_\tau$ $(\emph{T}_\tau \in \mathcal{T},0 \leq \tau \leq t-1)$ appears $\emph{d}^n$ times in $\boldsymbol{S}$.
	\end{theorem}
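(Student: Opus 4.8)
The plan is to read a closed form off of (\ref{rule}), record the mixed-radix correspondence for the frame index $b_n$, and then count by invertibility of units modulo $t$. First I would note that the relation $b_n = a_n d^{\,n-1}t + b_{n-1}$ together with the digit ranges $0 \le a_k \le d-1$ and $0 \le b_0 \le t-1$ expands to $b_n = b_0 + t\sum_{k=1}^{n} a_k d^{\,k-1}$, so that $b_n \mapsto (b_0, a_1, \ldots, a_n)$ is a bijection from $\{0,1,\ldots,d^n t - 1\}$ onto $\{0,\ldots,t-1\} \times \{0,\ldots,d-1\}^n$. Then, fixing a level $i$, a user index $j$ and a track $\theta$ and writing $K = \langle \omega + j\cdot r_i + \theta\rangle_t$ (a legitimate index into $\boldsymbol{c}^0$ because $\sum_{\zeta} u_\zeta r_\zeta \le t$), I would collapse $g^{a_n}g^{a_{n-1}}\cdots g^{a_1} = g^{\,a_1+\cdots+a_n}$ and $c^0_{\langle K + a_1\rangle_t}(b_0) = \langle K + a_1 + b_0\rangle_t$ to rewrite (\ref{rule}) as the single formula $z^i_{j,\theta}(b_n) = \big\langle g^{\,a_1+\cdots+a_n}\,(K + a_1 + b_0)\big\rangle_t$.

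With this closed form in hand the count is immediate. Fix a slot index $\tau \in \{0,\ldots,t-1\}$. By the bijection above, the number of positions $b_n$ at which the subsequence $\boldsymbol{z}^i_{j,\theta}$ equals $\tau$ (i.e. selects $\emph{T}_\tau$) is the number of tuples $(b_0,a_1,\ldots,a_n)$ with $g^{\,a_1+\cdots+a_n}(K+a_1+b_0) \equiv \tau \pmod{t}$. Since $g \in \mathbb{Z}^*_t$, the factor $g^{\,a_1+\cdots+a_n}$ is invertible modulo $t$, so the condition is equivalent to $b_0 \equiv g^{-(a_1+\cdots+a_n)}\tau - K - a_1 \pmod{t}$, and for each of the $d^n$ choices of $(a_1,\ldots,a_n) \in \{0,\ldots,d-1\}^n$ there is exactly one admissible $b_0 \in \{0,\ldots,t-1\}$. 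Hence $\emph{T}_\tau$ occurs exactly $d^n$ times in $\boldsymbol{z}^i_{j,\theta}$, which, since $\boldsymbol{z}^i_{j,\theta}$ has length $d^n t$, exhausts all of its entries with every time slot equally represented. As $i,j,\theta$ were arbitrary, every one of the $\sum_{i=0}^{\lambda-1} u_i\cdot r_i$ subsequences constituting $\boldsymbol{S}$ contains each time slot exactly $d^n$ times, which is the asserted count and also gives the equal-utilization property demanded of an HCS set.

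I expect the only real work to be the first step: carrying out the mixed-radix unfolding of $b_n$ carefully, verifying that $b_n \leftrightarrow (b_0,a_1,\ldots,a_n)$ is a bijection with exactly the stated digit ranges, and tracking that only $a_1$ enters the cyclic shift of the base sequence $\boldsymbol{c}^0$ while all of $a_1,\ldots,a_n$ feed the multiplier $g^{\,a_1+\cdots+a_n}$. Once that closed form is in place, the counting step is a one-line consequence of $\gcd(g,t)=1$, mirroring the counting argument already used for the first construction; the well-definedness of $K$ and the invertibility of $g^{m}$ modulo $t$ should pose no difficulty.
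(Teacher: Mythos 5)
Your argument is correct and is essentially the paper's own proof: both fix a target slot $\alpha=\tau$, rewrite the $n$-fold iteration as $g^{a_1+\cdots+a_n}(\,\cdot\,+a_1+b_0)\bmod t$, solve uniquely for $b_0$ using $\gcd(g,t)=1$, and count $d^n$ choices of the digits $(a_1,\ldots,a_n)$. Your write-up is in fact somewhat more careful than the paper's (you make the mixed-radix bijection $b_n\leftrightarrow(b_0,a_1,\ldots,a_n)$ explicit and avoid the paper's typographical slips, e.g.\ its ``$p^n$'' where $d^n$ is meant), but the route is the same.
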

	\begin{proof}
		Let $\alpha$ be an element of $\mathbb{Z}_t$. We count the number of indices $b_n$ such that $c^n_k(b_n)=\alpha$, i.e. $\emph{g}^{a_n}\emph{g}^{a_{n-1}}\cdots \emph{g}^{a_1}(\langle k+a_1\rangle_t+b_0)\equiv \alpha$ mod \emph{t},
		where $0\leq a_n \leq d-1, 0\leq b_n \leq d^nt-1$. Thus, we have $b_0\equiv \alpha\cdot g^{\langle -(a_n+a_{n-1}+\cdots +a_1) \rangle_d}$ mod \emph{t}. For each $0\leq a_n \leq d-1$, there exists a unique $b_0$ with $0\leq b_0 \leq t-1$ such that
		$b_0\equiv \alpha\cdot g^{\langle -(a_n+a_{n-1}+\cdots +a_1) \rangle_d}-(k+a_1)$ mod \emph{t}, it implies that there are exactly $p^n$ indices $b_n$ such that $z_{j,k}^i(b_n)=\alpha$.
	\end{proof}
	\begin{theorem}
		The sequence set $\boldsymbol{S}$ with parameters $(d^nt,\sum_{i=0}^{\lambda-1}u_i,t;\sum_{i=0}^{\lambda-1}u_i,\lambda)$ obtained from above construction is an optimal HCS set with respect to the new bound.
	\end{theorem}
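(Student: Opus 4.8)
The plan is to verify, in turn, that $\boldsymbol{S}$ satisfies the defining zero Hamming correlation condition of an HCS set, that it therefore has the announced parameters, and that these parameters attain the bound $(\ref{bound1})$ with equality.

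The key structural fact I would establish first is that in the intermediate control sequence set $\boldsymbol{C}^n=\{\boldsymbol{c}^n_k:0\le k\le t-1\}$, at every coordinate $b_n\in\{0,1,\dots,d^nt-1\}$ the map $k\mapsto c^n_k(b_n)$ is a permutation of $\mathbb{Z}_t$. Since $b_n=a_nd^{n-1}t+b_{n-1}$ unrolls recursively down to $b_1=a_1t+b_0$ with $0\le a_s\le d-1$ and $0\le b_0\le t-1$, every index $b_n$ is carried by a unique digit tuple $(a_1,\dots,a_n,b_0)$. Substituting $c^0_{k'}(b_0)=\langle k'+b_0\rangle_t$ into the iteration $(\ref{rule})$ and using commutativity of multiplication in $\mathbb{Z}_t$ collapses it, for a fixed coordinate $b_n$, to
\[
c^n_k(b_n)\equiv g^{\,a_1+a_2+\cdots+a_n}\,(k+a_1+b_0)\pmod t .
\]
As $k$ runs over $\mathbb{Z}_t$ so does $k+a_1+b_0$, and multiplication by $g^{a_1+\cdots+a_n}$ is a bijection of $\mathbb{Z}_t$ because $g\in\mathbb{Z}_t^{*}$; hence $k\mapsto c^n_k(b_n)$ is a permutation of $\mathbb{Z}_t$. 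In particular $c^n_k(b_n)\neq c^n_{k'}(b_n)$ whenever $k\neq k'$, for every coordinate, so $H_{\boldsymbol{c}^n_k,\boldsymbol{c}^n_{k'}}(0)=0$ for $k\neq k'$.

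Next I would push this down to $\boldsymbol{S}$. By $(\ref{rule})$ the user $\textbf{s}^i_j$ occupies, at coordinate $b_n$, exactly the time slots $\{c^n_{\omega+jr_i+\theta}(b_n):0\le\theta\le r_i-1\}$, where $\omega=\sum_{\zeta=0}^{i-1}u_\zeta r_\zeta$, so its index block is the interval $\{\omega+jr_i,\dots,\omega+jr_i+r_i-1\}$. As $(i,j)$ ranges over all $\sum_{i}u_i$ users these intervals are pairwise disjoint and tile $\{0,1,\dots,\sum_{i}r_iu_i-1\}\subseteq\{0,\dots,t-1\}$. Therefore for two distinct users the corresponding index sets are disjoint, and the permutation property above forces their occupied time slot sets to be disjoint at every coordinate; likewise the $r_i$ entries of a single $\textbf{s}^i_j$ are pairwise distinct at each coordinate, so it is a legitimate level-$i$ control sequence. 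Hence $H_{\textbf{s}^i_j,\textbf{s}^{i'}_{j'}}(0)=0$ for every distinct pair, so $\boldsymbol{S}$ is an HCS set; it has $\sum_{i}u_i$ sequences of length $d^nt$ over a slot set of size $t$, and, by Theorem~\ref{timeB}, every time slot is used with equal frequency, so all four requirements of Section~\ref{model} hold.

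For optimality I would argue exactly as for the first construction: given the access levels $r_0,\dots,r_{\lambda-1}$ and $t$, one chooses the numbers of users $u_i$ so that the index blocks above exhaust all $t$ control sequences, i.e. $\sum_{i=0}^{\lambda-1}r_iu_i=t$ (as in the worked example, $1\cdot1+3\cdot1+4\cdot1=8=t$). Then no time slot is left idle and equality is attained in $(\ref{bound1})$, so $\boldsymbol{S}$ is optimal with respect to the new bound. I expect the main obstacle to be the first step: carefully establishing the bijection between an index $b_n$ and its digit tuple $(a_1,\dots,a_n,b_0)$ and verifying that the $n$-fold iteration $(\ref{rule})$ really collapses to the displayed closed form with the single unit factor $g^{a_1+\cdots+a_n}$. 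Once that closed form is available, the disjointness of the index intervals, and hence the zero-correlation and optimality conclusions, are routine bookkeeping.
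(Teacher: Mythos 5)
Your proposal is correct and follows essentially the same route as the paper: both arguments collapse the $n$-fold iteration to $c^n_k(b_n)\equiv g^{a_1+\cdots+a_n}(k+a_1+b_0)\pmod t$ and use the invertibility of $g^{a_1+\cdots+a_n}$ in $\mathbb{Z}_t^*$ to conclude zero Hamming correlation, then obtain optimality from $\sum_i r_i u_i = t$. You state the key step as a permutation property rather than by contradiction and make explicit the disjointness of the index blocks $\{\omega+jr_i,\dots,\omega+jr_i+r_i-1\}$ (reading $\omega=\sum_{\zeta<i}u_\zeta r_\zeta$, which matches the worked example), a bookkeeping step the paper leaves implicit.
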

	\begin{proof}
		It can be obtained from the above construction that the control sequence set $\boldsymbol{C}^n$ is with \emph{t} sequences of length $\emph{d}^nt$ over a time slot set of size \emph{t}.
		For any two sequences $\boldsymbol{c}^n_{k_1},\boldsymbol{c}^n_{k_2} \in \boldsymbol{C}^n$, suppose $\emph{H}(\boldsymbol{c}^n_{k_1},\boldsymbol{c}^n_{k_2})\neq 0$,
		so there exists any $0\leq k_1 \neq k_2 \leq t-1$, and $0\leq b_n \leq d^nt-1$ satisfying $\emph{c}^n_{\langle k_1 + a_1 \rangle_t}(b_n)=\emph{c}^n_{\langle k_2 + a_1 \rangle_t}(b_n)$ mod \emph{t}.
		So, we have
		\begin{center}
			\resizebox*{\columnwidth}{!}{
				$\emph{g}^{a_n}\emph{g}^{a_{n-1}}\cdots \emph{g}^{a_1}(k_1+a_1+b_0)\equiv \emph{g}^{a_n}\emph{g}^{a_{n-1}}\cdots \emph{g}^{a_1}(k_2+a_1+b_0)$ mod \emph{t}.
			}
		\end{center}
		This gives
		\begin{center}
			$\emph{g}^{\langle a_n+a_{n-1}+\cdots +a_1 \rangle_d}(k_1-k_2)\equiv 0$ mod \emph{t}.
		\end{center}
		Then we obtain $k_1 \equiv k_2$ mod \emph{t} for $\emph{g}^{\langle a_n+a_{n-1}+\cdots +a_1\rangle_d} \in \mathbb{Z}^*_t$. A contradiction. So we get the Hamming correlation $H(\boldsymbol{S})=0$.
		By step 1 of the construction, we can obtain that there are $t$ sequences in initial sequence set $\boldsymbol{C}$, it is obvious that every time slot in a time frame will be occupied.
		So the equal sign in the theoretical bound holds, i.e. the HCS set obtained from above construction is optimal with respect to the new bound.
	\end{proof}
	\begin{algorithm}[h]
		\caption{The Second Construction}
		\label{alg:B}
		\begin{algorithmic}[1]
			\REQUIRE integer $g \in \mathbb{Z}^*_t$, integer $n$, the number of time slots $t$, the number of access levels $\lambda$, the level values $r_i(0\leq i\leq \lambda-1)$, the number of users $u_i$ with level $i$, initial set $\boldsymbol{C}^0=\{\textbf{c}^0_k|0\leq k \leq t-1\}$
			\ENSURE a $(d^nt,\sum_{i=0}^{\lambda-1}u_i,t;\sum_{i=0}^{\lambda-1}u_i,\lambda)$ HCS set $\boldsymbol{S}$
			\STATE Initial $\textbf{c}^0_k=k+b_0$ $mod$ $t$, $0\leq b_0 \leq t-1$, $ord(g)=d=\varphi(t)$, $\omega=\frac{\sum^{i-1}_{\zeta=0}u_\zeta\cdot r_\zeta}{R}$, $\boldsymbol{S}=\{\textbf{s}_j^i\}=\phi$, $\textbf{z}^i_{j,\theta}=\phi$;
			\FOR{$b_n=0$ to $d^nt-1$}
			\FOR{$\theta=0$ to $r_i-1$}
			\STATE $c_{\omega+j\cdot r_i+\theta}^n(b_n)$
			\STATE $=g^{a_n}g^{a_{n-1}}\dots g^{a_1}c^0_{\langle \omega+j\cdot r_i+\theta+a_1 \rangle_t}(b_0)$ $mod$ $t$;
			\STATE $z_{j,\theta}^i=c_{\omega+j\cdot r_i+\theta}^n(b_n)$;
			\IF{$b_n==d^nt-1$}
			\STATE $\textbf{s}_j^i=\textbf{z}_{j,\theta}^i;$
			\STATE Update $\boldsymbol{S}$;
			\ENDIF
			\ENDFOR
			\ENDFOR
		\end{algorithmic}
	\end{algorithm}
	Since there can be different numbers of users satisfying the theoretical bounds in Section \ref{Bound} for the same set of time slots, number of access levels, and access level values, the above examples of the two constructions are only an optimization case.
	From the above two constructions, we can obtain that the first one emphasizes the time slots occupied by users within a frame, which are relatively evenly distributed, while the time slots occupied by users between frames are random.
	The second construction is that the time slots occupied by users are random both within a frame and between frames.
	\subsection{Access Control Protocol Based on HCS}
	\begin{figure}[htb]
		\begin{center}
			\includegraphics[width=\linewidth]{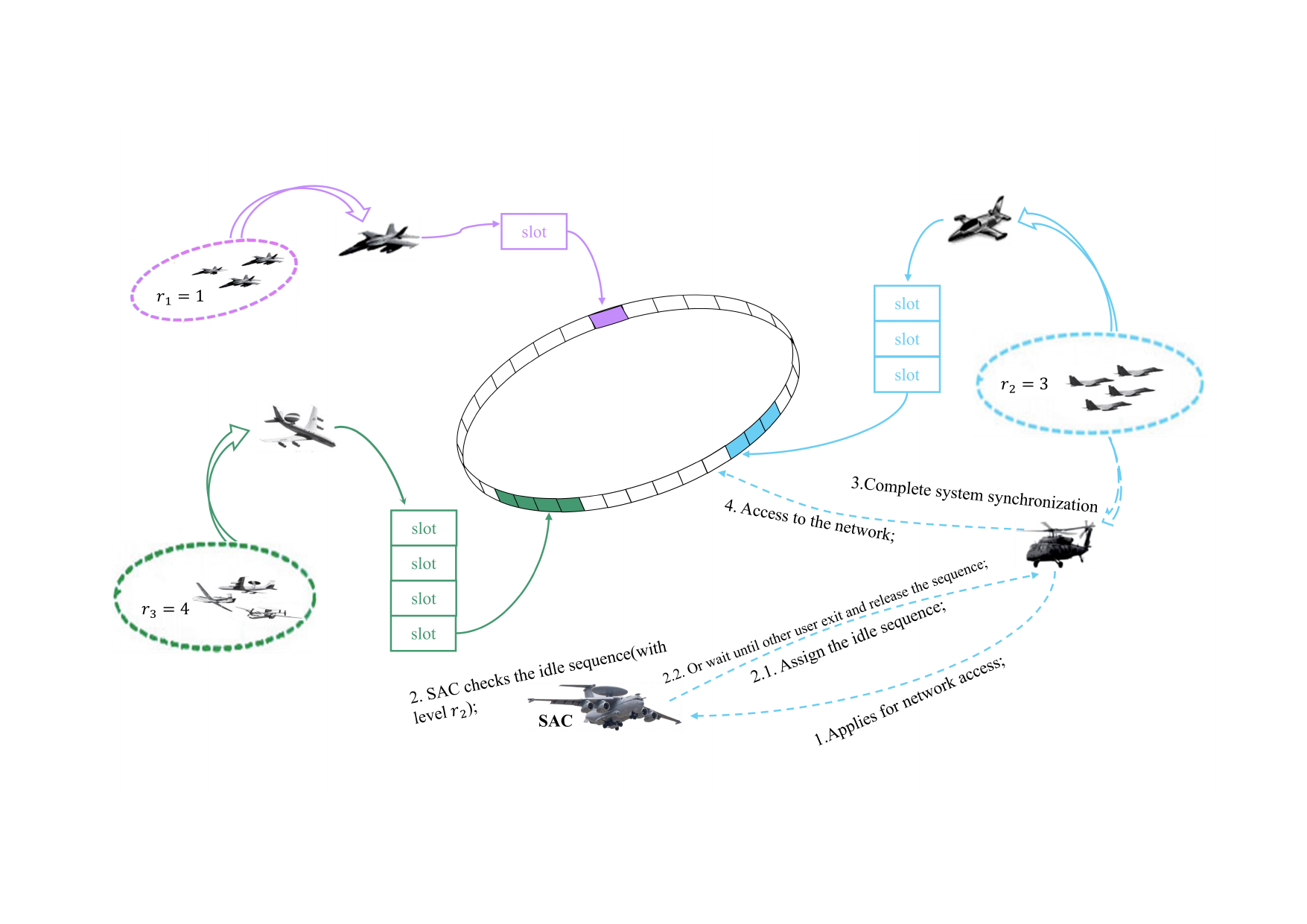}
		\end{center}
		\caption{Access to the group in data link. The solid line represents that the node has been entered, while the dotted line indicates that the node is applying for access in.}
		\label{fig:access process}
	\end{figure}
	In this part, we describe a rule of access control based on HCS generated by the above constructions.
	First of all, we define an HCS set $\boldsymbol{S}=\{\textbf{s}_j^i|0\leq i \leq \lambda-1,0\leq j \leq u_i-1\}$ to achieve the hierarchical access mechanism of data link.
	Once a user applies for network access, the Sequence Assignment Center(SAC) will randomly assign a sequence corresponding to the access level they belong to.
	We assume that the period of time is divided into $N$ time frames, and each of which is divided into $t$ time slots.
	Every element in $\boldsymbol{s}_j^i(\delta)=\big(z^i_{j,0}(b_n),z^i_{j,1}(b_n),...,z^i_{j,r_i-1}(b_n)\big)$ is used to determine the time slot position of the user to transmit data.
	Then, we will describe this rule from the process of entering the network.
	SAC initializes the HCS set as $\boldsymbol{S}=\{\boldsymbol{S}^i|0\leq i \leq \lambda-1\}$ and $|\boldsymbol{S}^i|=u_i$. And the specific process is shown in Fig.\ref{fig:access process}:
	\begin{itemize}
		\item A certain user with $r_2=3$ applies for network access at a random frame($j$-th frame);
		\item SAC checks if there is an idle sequence in $\boldsymbol{S}^2$;
		\item If there is an idle sequence, then it will be allocated to the user, who will start data transmission using the corresponding time slots in the sequence from $j$-th frame after completing system synchronization;
		\item If there is no available sequence, the user must wait until another user at the same level exits the network and releases sequence. SAC will then assign the reclaimed sequence to the user;
		\item After the sequence is allocated, the user will synchronize with the system, and then use the time slots corresponding to the $j$-th frame in the sequence for access.
	\end{itemize}
	\section{Performance Evaluation of HCS}\label{evaluate}
	In this section, we analyze and evaluate HCSs from two perspectives: communication applicability and anti-interference performance in data link networks with TDMA architecture.
	\subsection{Sequence Parameters Comparison}
	We compare the protocol sequences, frequency hopping sequences, and existing control sequences mentioned earlier in this paper with the HCSs presented in Section~\ref{Cons}. The comparison primarily focuses on the applicability in data link with TDMA architecture, suitability for collision-free communication among users, and suitability for multi-level access communication among users, which is shown in Table~\ref{table:comparison}.
	\begin{table*}[h]
		\renewcommand{\arraystretch}{1.5}
		\begin{center}
			\caption{Comparison of Parameters}
			\label{table:comparison}
			\resizebox*{\linewidth}{!}{
				\begin{tabular}{|c|c|c|c|c|c|c|c|}
					\hline
					\multirow{2}{*}{Ref}                      &     \multirow{2}{*}{Length}                    &     \multirow{2}{*}{Family Size}                                      &     Time Slot Size/                                       &     \multirow{2}{*}{Number of Users}                                        &     applicability                           &     collision-free                                                    &     \multirow{2}{*}{Hierarchical Access}\\
					&                                                &                                                                       &     Alphabet Size                                         &                                                                             &     in data link                            &     communication                                                     &      \\
					\hline
					Protocol Sequence\cite{L.Q.Gui}           &     $pq$                                       &     $p$                                                               &     /                                                     &     $K$                                                                     &     \ding{55}                               &     \ding{55}                                                         &     \ding{55}\\
					\hline
					Frequency Hopping Sequence\cite{Niu2}(Con.A)      &     $q^m-1$                            &     $q^{m-t}$                                                         &     $q^{m-t}$                                             &     $q^{m-t}$                                                               &     \ding{55}                               &     \ding{55}                                                         &     \ding{55}\\
					\hline
					Frequency Hopping Sequence\cite{Niu2}(Con.B)      &     $q^m-1$                            &     $\frac{q^{m-t}-1}{r}$                                             &     $\frac{q^{m-t}-1}{r}+1$                               &     $\frac{q^{m-t}-1}{r}$                                                   &     \ding{55}                               &     \ding{55}                                                         &     \ding{55}\\
					\hline
					Control Sequence\cite{Liu}(Con.A)         &     $l$                                        &    $n$                                                                &    $m$                                                    &    $n$                                                                      &     \ding{51}                               &     \ding{51}                                                         &     \ding{55}\\
					\hline
					Control Sequence\cite{Liu}(Con.B)         &     $l$                                        &    $n$                                                                &    $m$                                                    &    $n$                                                                      &     \ding{51}                               &     \ding{51}                                                         &     \ding{55}\\
					\hline
					Control Sequence\cite{Niu3}               &     $d^jm$                                     &    $m$                                                                &    $m$                                                    &    $m$                                                                      &     \ding{51}                               &     \ding{51}                                                         &     \ding{55}\\
					\hline
					Control Sequence\cite{Tan}                &     $MZ(M-1)$                                  &    $M$                                                                &    $MZ$                                                   &    $M$                                                                      &     \ding{51}                               &     \ding{51}                                                         &     \ding{55}\\
					\hline
					This paper(Con.1)                         &     $tR$                                       &    $\sum_{i=0}^{\lambda-1}u_i$                                        &    $t$                                                    &    $\sum^{\lambda-1}_{i=0}u_i$                                              &     \ding{51}                               &     \ding{51}                                                         &     \ding{51}\\
					\hline
					This paper(Con.2)                         &     $d^nt$                                     &    $\sum^{\lambda-1}_{i=0}u_i$                                        &    $t$                                                    &    $\sum^{\lambda-1}_{i=0}u_i$                                              &     \ding{51}                               &     \ding{51}                                                         &     \ding{51}\\
					\hline
				\end{tabular}
			}
		\end{center}
	\end{table*}
	Combining the information provided in the table and the preceding descriptions in Section~\ref{intro}, it can be observed that protocol sequences and frequency hopping sequences are not suitable for data link communication with TDMA architecture. Furthermore, both of them allow for collisions and are not suitable for collision-free communication among users or hierarchical access communication. In contrast, HCSs build upon existing control sequences, and enables the collision-free hierarchical access communication within the context of a data link time-division multiple access architecture.
	\subsection{Simulation Results}
	We present the simulation results based on the hierarchical access control schemes and the fixed time slot assignment scheme under different interference numbers and powers, respectively.
	Since the proposed HCSs concern slot utilization and anti-interference performance, users will benefit from the full slot utilization as well as the improved interference immunity.
	Assume that all users are communicating via the Additive Gaussian White Noise(AGWN) channel and that each user accesses one channel.
	We simulate the symbol error ratio(SER) under the following conditions:
	$(a)$ the number of time slots is $t=8$;
	$(b)$ the access level of users is $r=4$;
	$(c)$ the lengths of the HCSs generated by the above two constructions are 32 and 128, respectively.
	$(d)$ the time slots in the fixed time slot assignment scheme are [0,2,4,5];
	$(e)$ the interference point for single-interference case is set in time slot[2];
	$(f)$ the interference point for multiple-interference case is set in time slots [1,4,5];
	$(g)$ the interference powers are 10dB and 15dB respectively.
	\begin{figure}[htb]
		\begin{center}
			\subfloat[\label{single_dB10} Interference power of 10dB]{\includegraphics[width=0.7\columnwidth]{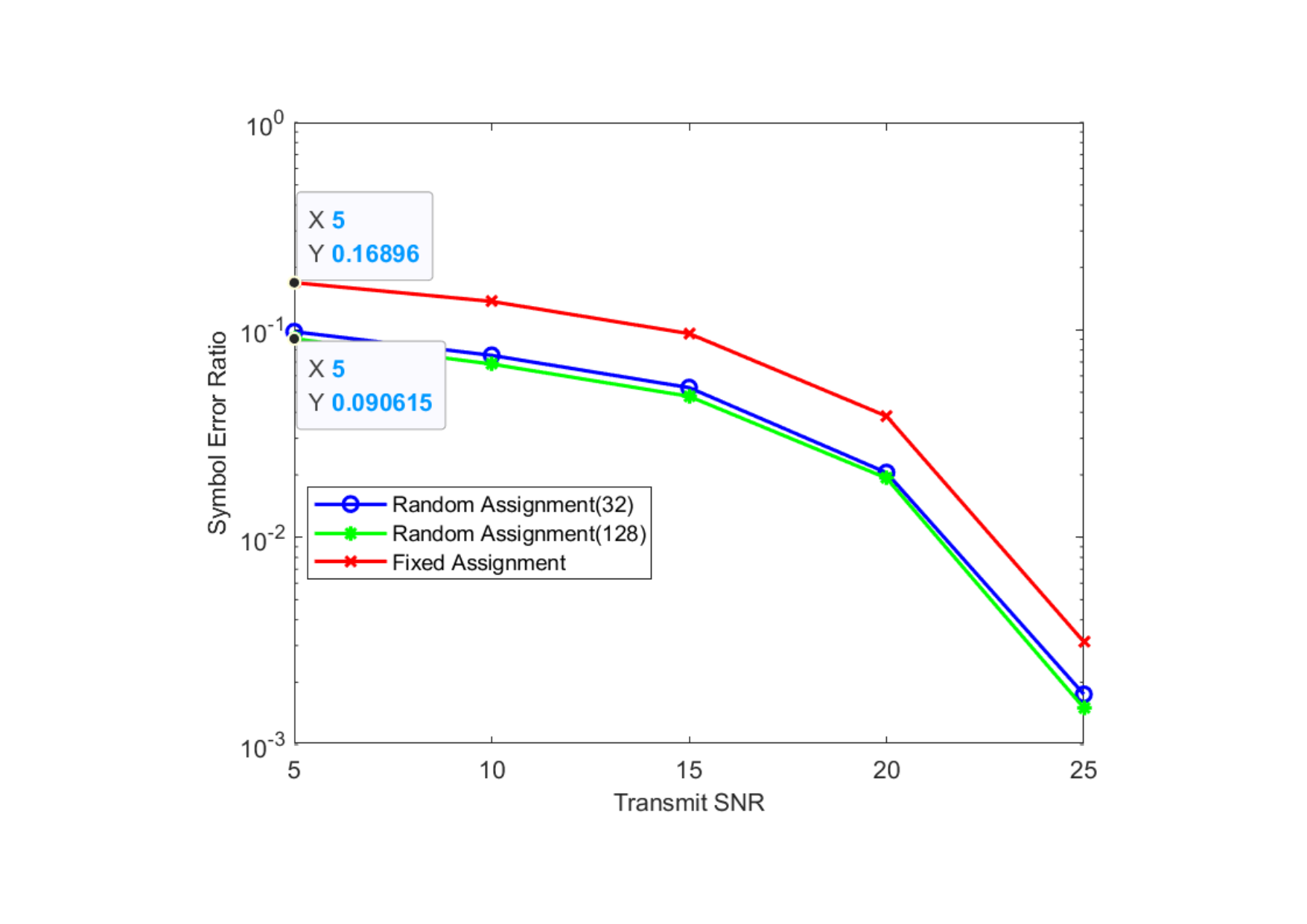}}
			\hfill
			\subfloat[\label{single_dB15} Interference power of 15dB]{\includegraphics[width=0.7\columnwidth]{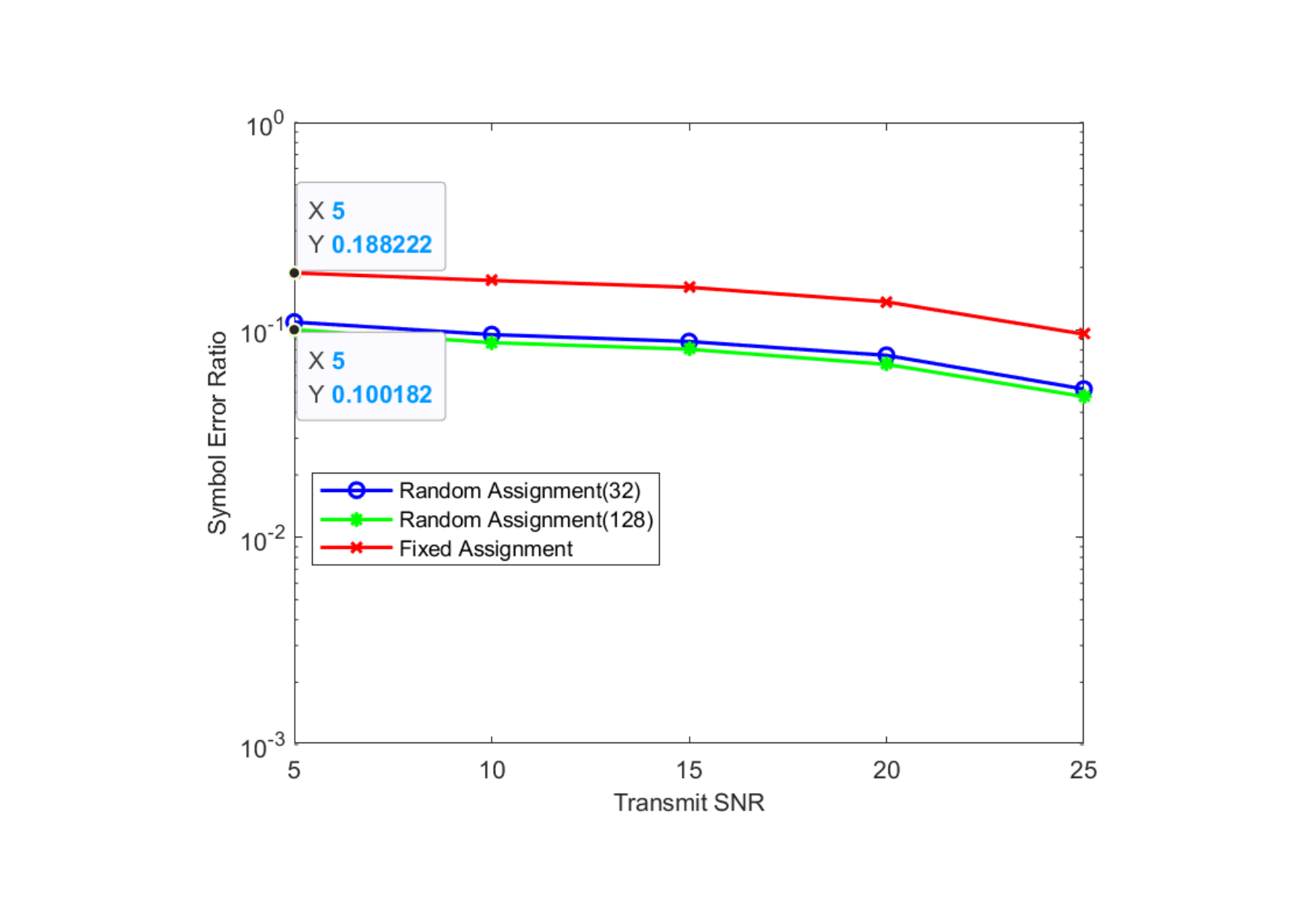}}
		\end{center}
		\caption{The SER under the case of single-interference}
		\label{fig:single}
	\end{figure}
	\begin{figure}[htb]
		\begin{center}
			\subfloat[\label{multi_dB10} Interference power of 10dB]{\includegraphics[width=0.7\columnwidth]{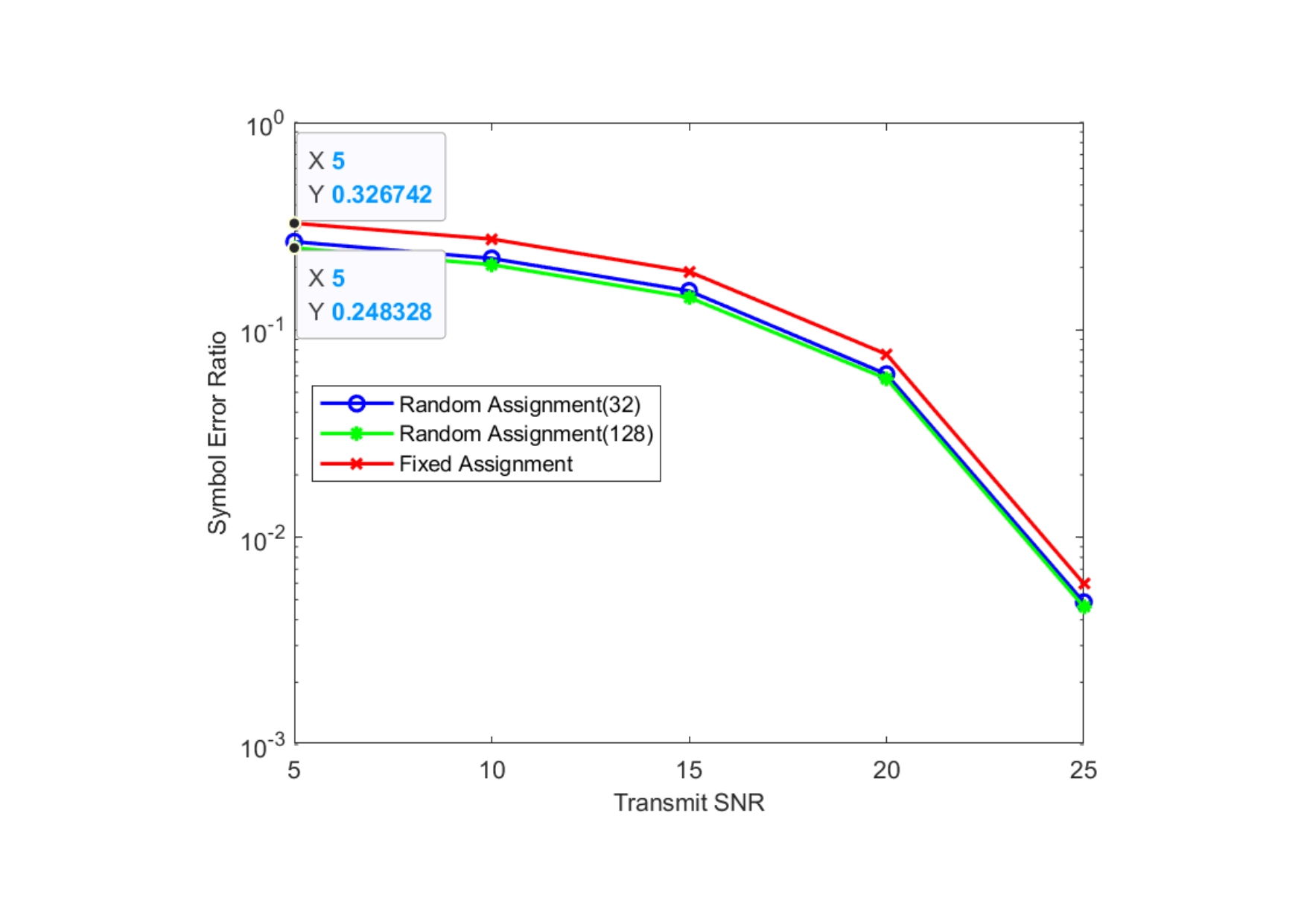}}
			\hfill
			\subfloat[\label{multi_dB15} Interference power of 15dB]{\includegraphics[width=0.7\columnwidth]{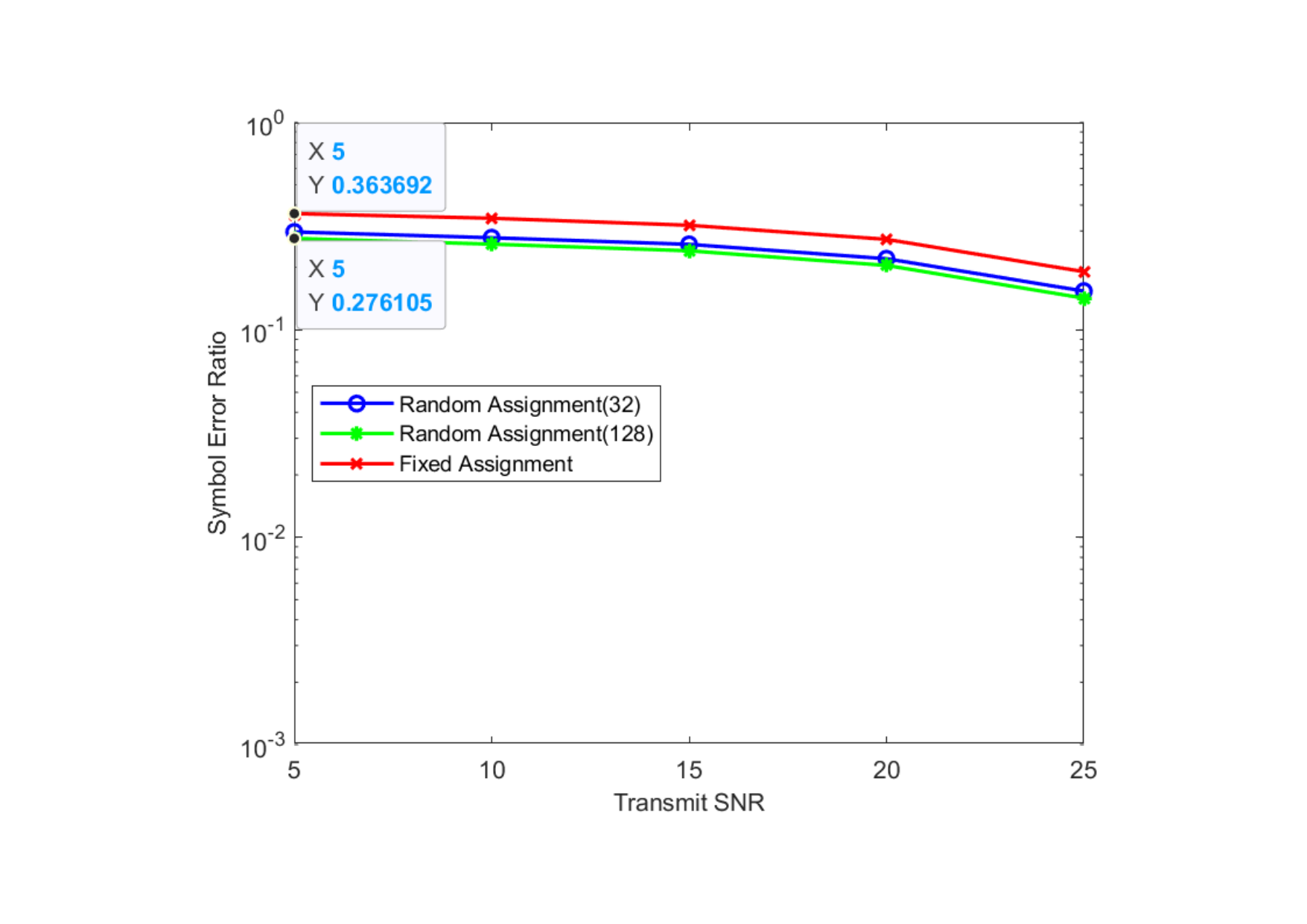}}
		\end{center}
		\caption{The SER under the case of multiple-interference}
		\label{fig:multi}
	\end{figure}
	First of all, we can see from Fig.\ref{fig:single} that the SER of the HCS-based hierarchical access control scheme is lower than that of the fixed time slot assignment scheme.
	Furthermore, when the SNR is 5dB, the maximum difference in SER between the two schemes is the greatest(about 7$\%$), which means that in the case of single-interference and interference power of 10dB, the anti-interference performance of the hierarchical access control scheme can be improved by up to about 7$\%$ compared to the fixed time slot assignment scheme.
	And in the case of interference power of 15dB, the anti-interference performance of the hierarchical access control scheme can be improved by up to about 8$\%$.
	In a multiple-interference environment(Fig.\ref{fig:multi}), compared to the fixed time-slot assignment scheme, the hierarchical access control scheme has the maximum anti-interference improvement of approximately 7$\%$ and 8$\%$ at different interference powers.
	Additionally, we compare the SER of different lengths of sequences generated by the above two constructions under the same parameters of $t=8,r=4$. As can be seen from these figures, on the basis of better anti-interference performance than the fixed time slot assignment scheme, the longer the HCS, the better its anti-interference performance.
	Based on this simulation result, we can conclude that longer HCSs can be used for sending and receiving long messages between users, while shorter HCSs are more suitable for transmitting short messages.
	\section{Conclusions}\label{Conclu}
	In this paper, we analyzed the significance of hierarchical access control in data link networks, and proposed a HCS-based hierarchical access control model to address this problem.
	Then, in order to better measure the performance of HCSs, we derived a new theoretical bound for HCS in terms of access levels of users, number of users and number of time slots.
	Based on the new bound, two classes of optimal HCS sets are constructed. Using the HCSs, users with different access levels can achieve hierarchical, random, and secure access in data link for communication, and ensure maximum utilization of time slots without conflicts.
	Finally, through the performance evaluation, it is shows that HCSs is the optimal choice for hierarchical access communication among users in data link with TDMA architecture. Moreover, the excellent anti-interference performance of the HCS-based hierarchical access control scheme is demonstrated, and the longer the HCS, the better the anti-interference performance it exhibits.
	It is expected that the new bound and new constructions will be useful for future research on the problems of hierarchical access control in data link networks.
	\section*{Acknowledgment}
	The work of this paper was supported by the National Science Foundation of China (No. 62171387), the China Postdoctoral Science Foundation (No. 2019M663475).
	\bibliographystyle{unsrt}
	\bibliography{REF}

\begin{thebibliography}{10}

\bibitem{Sturdy}
James~T Sturdy.
\newblock Military data link integration application.
\newblock {\em Honeywell Defense and Space Electronic Systems, Albuquerque, New Mexico}, 2004.

\bibitem{C.H.Kao}
Chi-Han Kao, Frank Kragh, and Clark Robertson.
\newblock Performance analysis of a jtids/link-16-type waveform transmitted over nakagami fading channels with pulsed-noise interference.
\newblock In {\em MILCOM 2008-2008 IEEE Military Communications Conference}, pages 1--6. IEEE, 2008.

\bibitem{R.L}
Ron LeFever and RC~Harper.
\newblock Performance of high rate link-22 operation obtained by using quadrature amplitude modulation (qam).
\newblock In {\em MILCOM 1999. IEEE Military Communications. Conference Proceedings (Cat. No. 99CH36341)}, volume~2, pages 979--983. IEEE, 1999.

\bibitem{zhang}
Xuehua Zhang, Ali Ghrayeb, and Mazen Hasna.
\newblock On hierarchical network coding versus opportunistic user selection for two-way relay channels with asymmetric data rates.
\newblock {\em IEEE transactions on communications}, 61(7):2900--2910, 2013.

\bibitem{block}
Frederick~J Block and MB~Pursley.
\newblock An adaptive-transmission protocol for direct-sequence spread-spectrum packet radio networks.
\newblock In {\em MILCOM 2002. Proceedings}, volume~2, pages 844--849. IEEE, 2002.

\bibitem{zhangY}
Yiwei Zhang, Dazhi He, Yihang Huang, Yin Xu, and Wenjun Zhang.
\newblock Performance analysis and optimization of ldm-based layered multicast.
\newblock {\em IEEE Transactions on Broadcasting}, 68(1):132--142, 2021.

\bibitem{zhou}
Zhengchun Zhou, Fangrui Liu, Avik~Ranjan Adhikary, and Pingzhi Fan.
\newblock A generalized construction of multiple complete complementary codes and asymptotically optimal aperiodic quasi-complementary sequence sets.
\newblock {\em IEEE Transactions on Communications}, 68(6):3564--3571, 2020.

\bibitem{shen}
Bingsheng Shen, Yang Yang, Yanghe Feng, and Zhengchun Zhou.
\newblock A generalized construction of mutually orthogonal complementary sequence sets with non-power-of-two lengths.
\newblock {\em IEEE Transactions on Communications}, 69(7):4247--4253, 2021.

\bibitem{Adhikary}
Avik~Ranjan Adhikary, Yanghe Feng, Zhengchun Zhou, and Pingzhi Fan.
\newblock Asymptotically optimal and near-optimal aperiodic quasi-complementary sequence sets based on florentine rectangles.
\newblock {\em IEEE Transactions on Communications}, 70(3):1475--1485, 2021.

\bibitem{wang}
Yu-Hsiang Wang, Guu-Chang Yang, and Wing~C Kwong.
\newblock Asynchronous channel-hopping sequences with maximum rendezvous diversity and asymptotic optimal period for cognitive-radio wireless networks.
\newblock {\em IEEE Transactions on Communications}, 70(9):5853--5866, 2022.

\bibitem{I.Demirkol}
Ilker Demirkol, Cem Ersoy, and Fatih Alagoz.
\newblock Mac protocols for wireless sensor networks: a survey.
\newblock {\em IEEE Communications magazine}, 44(4):115--121, 2006.

\bibitem{H.H.Choi1}
Hyun-Ho Choi, Gyung-Ho Hwang, and Dong-Ho Cho.
\newblock Adaptive random access and resource allocation scheme based on traffic load in hiperlan type2 system.
\newblock In {\em 2003 IEEE Wireless Communications and Networking, 2003. WCNC 2003.}, volume~2, pages 1312--1316. IEEE, 2003.

\bibitem{W.Hu}
Weihong Hu, Xiaolong Li, and Homayoun Yousefi'zadeh.
\newblock La-mac: A load adaptive mac protocol for manets.
\newblock In {\em GLOBECOM 2009-2009 IEEE Global Telecommunications Conference}, pages 1--6. IEEE, 2009.

\bibitem{A.N.Alvi}
Ahmad~Naseem Alvi, Safdar~Hussain Bouk, Syed~Hassan Ahmed, Muhammad~Azfar Yaqub, Nadeem Javaid, and Dongkyun Kim.
\newblock Enhanced tdma based mac protocol for adaptive data control in wireless sensor networks.
\newblock {\em Journal of communications and networks}, 17(3):247--255, 2015.

\bibitem{H.H.Choi2}
Hyun-Ho Choi.
\newblock Adaptive and prioritized random access and resource allocation schemes for dynamic tdma/tdd protocols.
\newblock {\em Journal of information and communication convergence engineering}, 15(1):28--36, 2017.

\bibitem{X.Liu}
Liu Xing, ZX~He, and Zhou Jing-Lun.
\newblock A minimum-jitter fixed timeslot allocation algorithm for tdma tactical data link.
\newblock {\em System Engineering}, 30(6):90--94, 2012.

\bibitem{Z.G}
Zhe Guo and Zheng Yan.
\newblock Dynamic mac protocol for tactical data links.
\newblock In {\em 2017 IEEE International Conference on Computer and Information Technology (CIT)}, pages 87--92. IEEE, 2017.

\bibitem{A.M.Lewis}
Arianne~M Lewis and Steven~V Pizzi.
\newblock Quality of service for tactical data links: Tdma with dynamic scheduling.
\newblock In {\em MILCOM 2005-2005 IEEE Military Communications Conference}, pages 2350--2359. IEEE, 2005.

\bibitem{T.Zhang}
Tianjiao Zhang and Qi~Zhu.
\newblock Evc-tdma: An enhanced tdma based cooperative mac protocol for vehicular networks.
\newblock {\em Journal of Communications and Networks}, 22(4):316--325, 2020.

\bibitem{F.Lyu}
Feng Lyu, Ju~Ren, Peng Yang, Nan Cheng, Wenjuan Tang, Yaoxue Zhang, and Xuemin~Sherman Shen.
\newblock Fine-grained tdma mac design toward ultra-reliable broadcast for autonomous driving.
\newblock {\em IEEE Wireless Communications}, 26(4):46--53, 2019.

\bibitem{J.Massey1}
JL~Massey.
\newblock The capacity of the collision channel without feedback.
\newblock In {\em Papers, IEEE Int. Symp. Inform. Theory}, page 101, 1982.

\bibitem{J.Massey2}
J~Massey and Peter Mathys.
\newblock The collision channel without feedback.
\newblock {\em IEEE Transactions on Information Theory}, 31(2):192--204, 1985.

\bibitem{K.W.Shum}
Kenneth~W Shum and Wing~Shing Wong.
\newblock Construction and applications of crt sequences.
\newblock {\em IEEE Transactions on Information Theory}, 56(11):5780--5795, 2010.

\bibitem{Y.Zhang}
Yijin Zhang, Yuan-Hsun Lo, Wing~Shing Wong, and Feng Shu.
\newblock Protocol sequences for the multiple-packet reception channel without feedback.
\newblock {\em IEEE Transactions on Communications}, 64(4):1687--1698, 2016.

\bibitem{L.Q.Gui}
Linqing Gui, Xinyu Huang, Fu~Xiao, Yijin Zhang, Feng Shu, Jun Wei, and Thierry Val.
\newblock Dv-hop localization with protocol sequence based access.
\newblock {\em IEEE Transactions on Vehicular Technology}, 67(10):9972--9982, 2018.

\bibitem{F.Liu}
Fang Liu, Kenneth~W Shum, and Wing~Shing Wong.
\newblock Sequence-based unicast in wireless sensor networks.
\newblock {\em IEEE Transactions on Communications}, 68(1):429--444, 2019.

\bibitem{H.Fu}
Hung-Lin Fu, Yi-Hean Lin, and Miwako Mishima.
\newblock Optimal conflict-avoiding codes of even length and weight 3.
\newblock {\em IEEE transactions on information theory}, 56(11):5747--5756, 2010.

\bibitem{M.Mishima}
Miwako Mishima and Koji Momihara.
\newblock A new series of optimal tight conflict-avoiding codes of weight 3.
\newblock {\em Discrete Mathematics}, 340(4):617--629, 2017.

\bibitem{H.Shao}
Hua Shao and Norman~C Beaulieu.
\newblock Direct sequence and time-hopping sequence designs for narrowband interference mitigation in impulse radio uwb systems.
\newblock {\em IEEE Transactions on Communications}, 59(7):1957--1965, 2011.

\bibitem{Niu1}
Xianhua Niu and Chaoping Xing.
\newblock New extension constructions of optimal frequency-hopping sequence sets.
\newblock {\em IEEE Transactions on Information Theory}, 65(9):5846--5855, 2019.

\bibitem{Niu2}
Xianhua Niu, Chaoping Xing, Yang Liu, and Liang Zhou.
\newblock A construction of optimal frequency hopping sequence set via combination of multiplicative and additive groups of finite fields.
\newblock {\em IEEE Transactions on Information Theory}, 66(8):5310--5315, 2020.

\bibitem{Liu}
X~Liu, L~Zhou, Q~Zeng, and L~Zhou.
\newblock Control sequences for access mechanism of data link.
\newblock In {\em International Conference on Sequences and Their Applications (SETA)}, 2018.

\bibitem{Niu3}
Xianhua Niu, L~Xing, and Liang Zhou.
\newblock Construction of control sequence set for access control in data link.
\newblock In {\em International Conference on Sequences and Their Applications (SETA)}, 2020.

\bibitem{Tan}
Xin Tan, Xianhua Niu, Jiabei Ma, Jianan Wang, Xing Liu, and Jianhong Zhou.
\newblock A construction of control sequences set for access control in data link.
\newblock In {\em 2022 10th International Workshop on Signal Design and Its Applications in Communications (IWSDA)}, pages 1--5. IEEE, 2022.

\end{thebibliography}
 
\end{document}